\newtheorem{theorem}{Theorem}
\newtheorem{lemma}[theorem]{Lemma}
\newtheorem{corollary}[theorem]{Corollary}
\newtheorem{proposition}[theorem]{Proposition}
\newtheorem{claim}[theorem]{Claim}
\theoremstyle{definition}
\newtheorem{definition}[theorem]{Definition}
\newtheorem{remark}{Remark}
\crefname{claim}{Claim}{Claims}
\DeclareMathOperator{\argmax}{arg\,max}
\newenvironment{claimproof}{\noindent\emph{Proof of Claim.}}{\hfill$\diamond$

}
\newcommand{\problemdef}[3]{
	\begin{center}\fbox{
	\begin{minipage}{0.95\textwidth}
		\noindent
		#1
		\vspace{5pt}\\
		\setlength{\tabcolsep}{3pt}
		\begin{tabularx}{\textwidth}{@{}lX@{}}
			\textrm{Input:}     & #2 \\
			\textrm{Question:}  & #3
		\end{tabularx}
	\end{minipage}}
	\end{center}
}
\newcommand{\problemdeftask}[3]{
	\begin{center}\fbox{
	\begin{minipage}{0.95\textwidth}
		\noindent
		#1
		\vspace{5pt}\\
		\setlength{\tabcolsep}{3pt}
		\begin{tabularx}{\textwidth}{@{}lX@{}}
			\textrm{Input:}     & #2 \\
			\textrm{Task:}  & #3
		\end{tabularx}
	\end{minipage}}
	\end{center}
}
\newcommand{\basicproblem}{\textsc{Adaptive Constructive Coalition Manipulation for Knockout Tournaments}\xspace}
\newcommand{\basicproblemabbrv}{ACCM-KT\xspace}
\newcommand{\nonadaptproblem}{\textsc{Constructive Coalition Manipulation for Knockout Tournaments}\xspace}
\newcommand{\nonadaptproblemabbrv}{CCM-KT\xspace}
\newcommand{\generalizedproblem}{\textsc{Adaptive Constructive Coalition Manipulation for Generalized Knockout Tournaments}\xspace}
\newcommand{\generalizedproblemabbrv}{ACCM-GKT\xspace}
\newcommand{\basicbestresponse}{\textsc{Best Response for ACCM-KT}\xspace}
\newcommand{\basicbestresponseabbrv}{\textsc{BR-ACCM-KT}\xspace}
\newcommand{\generalizedbestresponse}{\textsc{Best Response for ACCM-GKT}\xspace}
\newcommand{\generalizedbestresponseabbrv}{\textsc{BR-ACCM-GKT}\xspace}
\title{Adaptive Manipulation for Coalitions in Knockout~Tournaments}
\author[1]{Juhi~Chaudhary\thanks{Supported by the DAE, Government of India, project nr. RTI4001. (email: juhi.chaudhary@tifr.res.in).}}
\author[2]{Hendrik~Molter\thanks{Supported by the ISF, grant nr.~1470/24 and by the European Union's Horizon Europe research and innovation programme under grant agreement 949707. (email: molterh@post.bgu.ac.il).}}
\author[2]{Meirav~Zehavi$^*$\thanks{Supported by the ISF, grant nr.~1470/24 and by the ERC grant nr.~101039913 (PARAPATH). (email: meiravze@bgu.ac.il).}}
\affil[1]{\small School of Technology and Computer Science, Tata Institute of Fundamental Research, Mumbai, 
India.}
\affil[2]{\small Department of Computer Science, Ben-Gurion~University~of~the~Negev, 
Beer-Sheva, 
Israel.}
\date{}
\begin{document}

\maketitle

\begin{abstract}

\emph{Knockout tournaments}, also known as \emph{single-elimination} or \emph{cup tournaments}, are a popular form of sports competitions. In the standard probabilistic setting, for each pairing of players, one of the players wins the game with a certain (a priori known) probability. Due to their competitive nature, tournaments are prone to manipulation. We investigate the computational problem of determining whether, for a given tournament, a coalition has a manipulation strategy that increases the winning probability of a designated player above a given threshold. More precisely, in every round of the tournament, coalition players can strategically decide which games to throw based on the advancement of other players to the current round. We call this setting \emph{adaptive constructive coalition manipulation}. To the best of our knowledge, while coalition manipulation has been studied in the literature, this is the first work to introduce adaptiveness to this context.

We show that the above problem is hard for every complexity class in the polynomial hierarchy. On the algorithmic side, we show that the problem is solvable in polynomial time when the coalition size is a constant. Furthermore, we show that the problem is fixed-parameter tractable when parameterized by the coalition size and the size of a minimum player set that must include at least one player from each non-deterministic game. Lastly, we investigate a generalized setting where the tournament tree can be imbalanced.

\bigskip

\noindent\textbf{Keywords:} Knockout Tournament, Constructive Manipulation, Coalition Manipulation, Parameterized Complexity
\end{abstract}

\section{Introduction}

\emph{Knockout tournaments}, also known as \emph{single-elimination} or \emph{cup tournaments}, constitute a competition format in which contestants are paired up and compete against each other in rounds, with the losers being eliminated after each round. The tournament continues until only one contestant remains, the \emph{winner}. This format is widely favored in sports~\cite{chaudhary2024make,CR11,GMSS12,suksompong2021tournaments,williams_moulin_2016} and finds applications in several other areas, such as elections and decision-making processes~\cite{vu2009complexity,laslier1997tournament,brandt2007pagerank,Tullock80,rosen1985prizes}. More formally, a knockout tournament consists of $n$ players, where we assume that $n$ is a power of two, and a bijective mapping, called a {\em seeding}, of the players to the leaves of a complete binary tree. As long as at least two players remain, every two players mapped to leaves having the same parent in the tree play a match, whose winner is mapped to the common parent; then, the leaves of the tree are deleted, which means that the losers are knocked out of the tournament. When exactly one player remains, it is declared the winner.

Due to their competitive nature, knockout tournaments are highly prone to various forms of manipulation. 
Such manipulations are frequently reported by the media~(see, e.g., \cite{news,news1,news2,manoli2015only,hill2010critical,feltes2013match}), and their vulnerability was shown empirically (see, e.g.,  \cite{stanton2013structure,mattei2016empirical}). The focus in almost all research on manipulation in knockout tournaments in the literature is on {\em constructive manipulation}, where we aim to make a favorite player win the competition.
Here, the most well-studied forms of manipulation include coalition manipulation (discussed below), and tournament fixing where, generally, the manipulation involves selecting the seeding~\cite{gupta2018rigging,gupta2019succinct,kim2017can,aziz2014fixing,zehavi2023tournament,gupta2018winning,vu2009complexity,williams2010fixing}, and bribery where we can flip the outcome of a bounded, or budget-constrained, number of matches~\cite{konicki2019bribery,mattei2015complexity,kim2015fixing}. 

We focus on constructive {\em coalition manipulation} in knockout tournaments. Constructive coalition manipulation is well-studied in various forms of tournaments (including knockout tournaments) as well as in various settings in voting theory, as discussed in \cref{sec:relatedWorks}. Here, generally, a given subset of players is termed a {\em coalition}. Then, the objective is to determine whether the coalition has a ``strategy''---where coalition players intentionally lose some of their games---so that the favorite player will win or will be likely to win. The simplest setting for such manipulation, considered by Russell and Walsh~\citep{russell2009manipulating}, is when the outcome of every potential match is deterministic and is known in advance. I.e., for every two players, we know with certainty who will win a match between them, if neither of them manipulates. Also, we define a strategy as the set of matches that the coalition players should lose, and for matches involving two coalition players, we further specify which player should lose.
Russell and Walsh~\citep{russell2009manipulating} considered the computational problem of determining, in this setting, whether there exists a strategy for the coalition players that makes the favorite player win and showed that it is solvable in polynomial time. 

Mattei et al.~\citep{mattei2015complexity} study coalition manipulation in the more general case where games have probabilistic outcomes. In their model, the coalition players may modify their winning probabilities before the start of the tournament so that the favorite player will have a higher chance of winning the tournament. However, we will make an argument that it is never non-optimal for a coalition player to either stay at their original winning probability or change it to zero, that is, lose intentionally. Hence, the two models may be considered equivalent.
Mattei et al.~\citep{mattei2015complexity} show that the computational problem of determining whether there exists a strategy for the coalition players such that the favorite player wins the tournament with at least some given probability is contained in NP. They leave open whether this problem is NP-hard.


\subsection{Our Model} 
While the simple setting of Russell and Walsh~\citep{russell2009manipulating} discussed above is an important entry point, it is unrealistic in one way and too restrictive in another. Specifically, on the one hand, it is rarely conceivable that the outcomes of all possible matches will be deterministic---i.e., that we can always be {\em completely certain}, in advance, who beats whom. To overcome this, we employ the standard {\em probabilistic model} for knockout tournaments, which is also used by Mattei et al.~\citep{mattei2015complexity}. Here, for every two players, we know some estimate of the probability of one beating the other. Such estimations can often be derived from available statistics regarding past matches; see, e.g.,~\cite{espn,predict,team,538}.

On the other hand, it is completely unnecessary to make all decisions regarding which manipulations to perform before the tournament has even started. Of course, if the tournament is entirely deterministic, then the timing of these decisions is immaterial. However, when probabilities are involved, the situation changes drastically. When we reach a certain round of the tournament, the coalition players observe what is the {\em current} seeding, that is, which players advanced to this round. According to this crucial information, the coalition players decide which matches to lose in the current round. That is, decisions are made on a round-by-round basis, making use of as much available information as possible. In other words, the strategy is {\em adaptive}. To the best of our knowledge, while coalition manipulation has been studied in the literature, this is the first work to introduce adaptiveness to this context.

Thus, we introduce a new computational problem termed \basicproblem (\basicproblemabbrv). Here, the input consists of:
\begin{multicols}{2}
\begin{itemize}
\item $N$, a set of $n$ players, 
\item a seeding of $N$,
\item for every two players, the probability of one beating the other,
\item a coalition $C\subseteq N$,
\item a favorite player $e^*\in N$, and
\item a probability $t\in[0,1]$.
\end{itemize}
\end{multicols}
Then, the objective is to determine whether there exists an {\em adaptive strategy} (formally defined in \cref{sec:prelim}) for the coalition players so that the probability that $e^*$ will be the winner is at least~$t$.
We assume that the non-coalition players do not behave strategically, that is, the outcome probability of all games not involving coalition players is given in the input.

Of course, the answer to the question above does not yield the strategy itself. Moreover, an explicit description of the strategy is of exponential size, since it encodes, for every round and every {\em possible} advancement of players to that round (of which there exist exponentially many), what should the coalition players do? Fortunately, there is no utility in knowing the entire strategy, as almost all of it concerns hypothetical situations that will not occur. Practically, what the coalition players need to know is what to do in a current round, to which, in particular, we know exactly who are the players that advanced. Thus, we also consider the accompanying {\em best response} version of the problem, where the objective is to compute only the part of the strategy that concerns the tournament's first round. Once the second round is about to start---and then we know who advanced to it---we can simply recompute the best response for the remaining tournament, treated as the input tournament, and so on for the later rounds. We term this problem \basicbestresponse (\basicbestresponseabbrv).

In addition to \basicproblemabbrv\ and \basicbestresponseabbrv, we also investigate the generalized versions of these problems where the tournament tree can be imbalanced. That is, the tournament tree is still binary, but the distance between the root to leaves can differ. Here, the players that participate in a round are those mapped to the leaves with the longest distance to the root. A formal definition can be found in \cref{sec:prelim}. We refer to the generalized problems as \generalizedproblem (\generalizedproblemabbrv) and \generalizedbestresponse (\generalizedbestresponseabbrv).
Lastly, one of our results also resolves an open case for the non-adaptive version of \basicproblemabbrv, which we call \nonadaptproblem (\nonadaptproblemabbrv).

\subsection{Our Results and Proof Ideas} In addition to the introduction of our adaptive model and corresponding computational problems, we prove several highly non-trivial technical contributions, described below. We present the hardness results in \cref{sec:hardness} and the algorithmic results in \cref{sec:algo}. Some of the questions left open are discussed in \cref{sec:conclu}.

 \subparagraph{Classical Hardness.} We first establish the classical hardness of our computational problems. Specifically, we prove that:\begin{itemize}
    \item \basicproblemabbrv is hard for each class in the polynomial hierarchy (PH). 
    \item \generalizedproblemabbrv is PSPACE-hard. (As stated below, we also assert containment in PSPACE.)
    \item \basicbestresponseabbrv is NP-hard.
    \item \nonadaptproblemabbrv is NP-hard. (This case was left open by Mattei et al.~\citep{mattei2015complexity}.)
\end{itemize}

The above hardness results are obtained by reductions from \textsc{Quantified Boolean Formula}~\cite{AB09} with different constraints on the quantifiers and the number of alternations.
The main ingredient for the reductions is a small (constant-sized) subtournament, where a coalition player can decide which of the other players wins the subtournament. We use these subtournaments to create \emph{variable gadgets} for the existentially quantified variables and \emph{clause gadgets}. Intuitively, in the former, the coalition player can select an assignment for an existentially quantified variable, and in the latter, the coalition player can select a literal of the clause (we can assume that each clause has three literals). In the simplest case, where there is only one existential quantifier, we arrange the gadgets (and define the winning probabilities) in a way that each winner of a variable gadget has a non-zero probability of reaching the semi-final game, and each winner of a clause gadget has a non-zero probability of reaching the same semi-final game. Then, the clause player beats the variable player with probability one if the clause player corresponds to a literal that involves the variable corresponding to the variable player and the variable player represents a truth assignment to the variable that does \emph{not} satisfy the literal. Otherwise, the variable player beats the clause player with probability one. In the final game, the winner of the variable and the clause player meets $e^*$, who loses against the clause player with probability one and wins against the variable player with probability one. This way, intuitively, the overall winning probability of $e^*$ is one only if the coalition can prevent a clause player from reaching the final game. This can only happen if a satisfying assignment of the \textsc{Satisfiability} instance is ``selected'' in the variable gadgets and satisfied literals are ``selected'' in the clause gadgets.

To obtain hardness for a class in the PH or PSPACE, informally, we have variable gadgets at different rounds of the overall tournament, alternatingly with players corresponding to universally quantified variables, who randomly advance (each with a non-zero probability) such that every possible assignment has a non-zero probability to occur. This allows coalition players in later variable gadgets to react to which players corresponding to universally quantified variables have advanced. As a consequence, the depth of the tournament depends linearly on the number of quantifier alternations. It follows that if the tournament needs to be balanced, it becomes exponentially large in the number of quantifier alternations. Intuitively, this is the reason why we can ``only'' show hardness for the PH classes for \basicproblemabbrv while we can show PSPACE-hardness for \generalizedproblemabbrv.

\subparagraph{Parameterized Algorithms.}  
We derive our algorithmic results from a versatile dynamic programming algorithm for \generalizedproblemabbrv.
The main idea is the following. We consider the subtree composed of the games that coalition players may participate in, and we consider all configurations of opponents they may have in those games. Now, for all potential sets of games that happen in the same round and where a coalition player is involved, we determine which \emph{strategy profile} for the current round maximizes the winning probability of $e^*$. Note that in the final game, this can be computed directly. Now, we perform dynamic programming along the rounds, from the last to the first. For every round, we use the information already computed for the next round to determine the best strategy profile for the coalition and the corresponding winning probability for $e^*$. By looking at the first round, we obtain the overall best winning probability of $e^*$.

We provide several running time analyses for the dynamic program that yield different tractability results. 
As parameters, we consider the coalition size $|C|$ and the size $x$ of a minimum \emph{random game cover}, that is, the size of a minimum player set that must include at least one player from each non-deterministic game (a formal definition is given in \cref{sec:prelim}).
Notice that $|C|$ is the most natural parameter for all of our computational problems---the size of the coalition is indeed expected to be substantially smaller than the number of all of the players in almost any conceivable scenario. This is due to several factors: larger coalitions are harder to establish as not everyone is willing to throw a match for the manipulator. Even from the manipulator's perspective, adding more players often results in diminishing returns, making the coalition less attractive, or it may draw unnecessary attention, increasing the risk of detection.
Next, we remark that $x$ can be useful as a parameter as well, since for a large number of potential matches, one player will be known to be substantially stronger than the other. Even though the result is unlikely to be deterministic, 
we can (somewhat safely) assign the probability of the stronger player winning to be one.

Specifically, we prove that:
\begin{itemize}
\item \generalizedproblemabbrv can be solved in $n^{O(|C|)}$ time.
\item \basicproblemabbrv can be solved in $(|C|+x)^{O(|C|+ x)}\cdot n^{O(1)}$ time.
\item \generalizedproblemabbrv is contained in PSPACE.
\end{itemize}

Furthermore, we show that the dynamic programming algorithm can also be used to compute best responses. Formally, we show the following.
\begin{itemize}
\item \generalizedbestresponseabbrv can be solved in $n^{O(|C|)}$ time and \generalizedbestresponseabbrv can be solved in polynomial space. 
\item \basicbestresponseabbrv can be solved in $(|C|+x)^{O(|C|+ x)}\cdot n^{O(1)}$ time.
\end{itemize}

\subparagraph{Parameterized Hardness.} Lastly, we complement the aforementioned results by proving the following parameterized hardness result.  
\begin{itemize}
    \item \generalizedproblemabbrv is NP-hard and W[1]-hard when parameterized by the coalition size even if $x$ (defined above) equals two.
\end{itemize}

The above hardness result is obtained by a parameterized reduction from \textsc{Multicolored Clique}~\cite{fellows2009multipleinterval}. Similarly to the reductions for the classical hardness results, the main ingredient is a selection gadget, which is a subtournament with one coalition player who can decide who of the other players wins. The main difference is that the gadget is not a balanced tournament, therefore it can have a polynomial number of players. Intuitively, this is also the reason why our hardness result only holds for \generalizedproblemabbrv and not \basicproblemabbrv.
We use one of these gadgets to select a vertex of each color and one of these gadgets to select an edge for each color combination. Note that this way, the number of coalition players is upper-bounded by the number of colors of the \textsc{Multicolored Clique} instance. The gadgets are now arranged in a way that each winner of a variable gadget has a non-zero probability of reaching the semi-final game, and each winner of a clause gadget has a non-zero probability of reaching the same semi-final game. To do this, we use two \emph{randomize gadgets} (one for the selected vertices and one for the selected edges) that each contain one player that is involved in all games with non-deterministic outcomes. This way, we obtain a random game cover of size two. In the semi-final game, the vertex player beats the edge player with probability one if the edge is from a color combination that involves the color of the vertex and the vertex is \emph{not} an endpoint of the edge. Otherwise, the vertex player beats the edge player with probability one. In the final game, the winner of the vertex and the edge player meets $e^*$, who loses against the edge player with probability one and wins against the vertex player with probability one. This way, intuitively, the overall winning probability of $e^*$ is one only if the coalition can prevent an edge player from reaching the final game. This can only happen if vertices that form a clique are selected in the vertex selection gadgets and the edges in that clique are selected in the edge selection gadgets.

\subsection{Additional Related Works}\label{sec:relatedWorks}

\subparagraph{Coalition Manipulation in Knockout Tournaments and Beyond.} 
Russell and Van Beek~\citep{russell2012detecting} investigated the problem of developing automated tools
for detecting coalitions of teams manipulating the winner in both knockout and round-robin competitions. In the realm of double-elimination tournaments (DETs), Stanton and Williams~\citep{stanton2013structure} established that coalition manipulation of DETs can be polynomially computed under certain restrictions. Here, coalitions of players are vulnerable to manipulation by throwing matches, a phenomenon recently observed in Olympic Badminton. Schneider et al.~\citep{schneider2016condorcet} also
discussed manipulation by coalitions through a collusion between several teams.

Shifting the focus to voting, Walsh~\citep{walsh2011hard} explored the computational complexity of assessing manipulation potential in weighted voting systems, where agents may manipulate outcomes by misrepresenting their preferences. Durand~\citep{durand2015towards} posed the question of whether a strategically voting subset could elect a preferred candidate over the truthful voting outcome. Even simulations on empirical data have been done by Durand~\citep{durand2023coalitional}. Yang~\citep{yang2023complexity} investigated coalition manipulation, revealing that the two-stage majoritarian rule is resilient against most control challenges but susceptible to coalition manipulation.

\subparagraph{Other Forms of Manipulation in Knockout Tournaments and Beyond.} 
The standard manipulation practice seen in knockout tournaments is from the point of view of organizers, who fix the seeding in their interest to make sure their favorite player wins. This problem is known as the \textsc{Tournament
Fixing Problem} (TFP). In TFP, a matrix $P$ is a part of the input, where the entry $P_{i,j}$ gives the probability that player $i$ beats player $j$ \cite{vu2009complexity}. The restricted version of TFP, where $P$ has entries from~$\{0,1\}$ only (the deterministic case), has also been studied and shown to be NP-hard by Aziz et al.~\citep{aziz2014fixing}. 
  A variation of TFP, allowing organizers to both arrange seeding and bribe players to decrease their probability of winning against others at a specified cost, provided it stays within a budget, has been studied by Konicki and Williams~\citep{konicki2019bribery}. In their model, the probability matrix~$P$ is either deterministic or $\varepsilon$-monotonic, reflecting player ordering and constraints on winning probabilities.

Manipulation has also been studied across other tournament formats. For example, \citet{saarinen2015probabilistic} proved that manipulating a \emph{round-robin tournament} by controlling the outcomes of a subset of games to exceed a winning probability threshold is $\#P$-hard, even when probabilities are restricted to $0$, $\frac{1}{2}$, or $1$. In the context of \emph{Challenge the Champ tournaments}, \citet{mattei2015complexity} and more recently \citet{chaudhary2024parameterized} investigated the setting where players can be bribed under a budget to lower their winning probability against the initial champ.


\section{Problem Statements and Preliminaries} \label{sec:prelim}

We begin by defining the following model. 
A tournament has a set of $n$ \emph{players} $N=\{e_1,\ldots, e_n\}$, where for simplicity, we assume that $n$ is a power of two, that is, $n=2^r$. 
Let $\mathbb{Q}_{[0,1]}^{n\times n}$ be the set of $n\times n$ matrices over $[0,1]\cap\mathbb{Q}$ (the rational numbers in the interval $[0, 1]$). Let $p(i,j)$ denote the probability that player $e_i$ will defeat player $e_j$. We denote with 
$P_N =[p(i,j)]_{i,j\in[n]\times[n]} \in \mathbb{Q}_{[0,1]}^{n\times n}$ the \emph{probability matrix} for the tournament.
Note that $p(i,j) + p(j,i) = 1$.

We call a permutation vector $s_1\in N^n$, that is, a vector of length $n$ that contains every player in~$N$ exactly once, a \emph{tournament seeding}. Let $r = \log n$ be the number of rounds of the tournament. Then we call a permutation vector $s_k\in N^{2^{(r-k+1)}}\times\{\bot\}^{2^r - 2^{r-k+1}}$ a \emph{seeding for the $k$th round} of the tournament (which has $2^{(r-k+1)}$ players). Note that a seeding for round $k$ is a vector of the same length as the tournament seeding. The $\bot$-symbol indicates that the seed positions are not present.
A tournament seeding determines how to label the $n$ leaves of an ordered complete binary tree with the players.

Given a set of players $N$ and a seeding $s_1$ of the players, we define a \emph{tournament tree} $T$ to be an ordered rooted complete binary tree with $n$ leaves, where leaves are labeled with the players according to the seeding $s_1$. More formally, the $i$th leaf of the tournament tree is labeled with player $e_j$ if and only if $s_1(i)=j$. For a vertex $v$ in $T$ that is not the root of the tree, we call $v'$ the \emph{sibling} of $v$ if $v$ and $v'$ have the same parent vertex in $T$. See \cref{fig1} for an illustration of a tournament tree.

  \begin{figure}[t]
 \centering
    \includegraphics[scale=1]{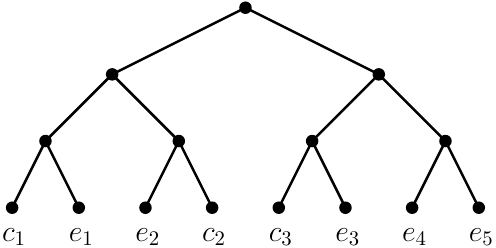}
    \caption{A tournament tree $T$ with 8 players: coalition players $\{c_1,c_2,c_3\}$ and non-coalition players~$\{e_1,\ldots,e_5\}$. The tournament seeding is $(c_1,e_1,e_2,c_2,c_3,e_3,e_4,e_5)$. Here, for example, $c_{1}$ and~$e_{1}$ are siblings of each other.} 
    \label{fig1}
\end{figure}

Given a seeding, the competition is conducted in rounds as follows. As long as the tournament tree has at least two leaves, every two players with a common parent in the tree play against each other, and the winner is promoted to the common parent; then, the leaves of the tree are deleted from it. Eventually, only one player remains, and this player is declared the winner.
Let $s_k$ be a seeding for the $k$th round of the tournament. If in round $k$ of the tournament, the leaves of the binary tree are labeled according to $s_k$, then we say that $s_k$ is obtained in round $k$.

\subsection{Adaptive Coalition Manipulation}

A \emph{coalition} $C\subseteq N$ is a subset of players.
A coalition player can intentionally lose a game, that is, set their winning probability to zero.
A \emph{strategy} for the coalition players is a function $\xi:C\times [\log n]\times (N\cup\{\bot\})^{n}\rightarrow \{0,1\}$ that takes a coalition player, a round number, and a seeding for that round as input, and outputs zero or one. 
Let $e$ be a coalition player, let $k$ be a round number, and let $s_k$ be a seeding for round $k$ of the tournament. If $\xi(e,k,s_k)=0$, then player $e$ intentionally loses their game in round $k$ of the tournament if the seeding $s_k$ is obtained in round~$k$. Otherwise, player $e$ tries to win their game. If two coalition players play against each other, then not both of them can intentionally lose the game.




Formally, we consider the following computational problem.
\problemdef{\basicproblem (\basicproblemabbrv)}{A set of players $N$, a coalition $C\subseteq N$, a favorite player $e^*\in N$, a probability matrix $P_N$, a tournament seeding $s_1$, and a probability threshold $t\in[0,1]$.}{Is there a strategy $\xi$ for the coalition players such that when applied the winning probability of $e^*$ (for the whole tournament) is at least $t$?}

We remark that Mattei et al.~\citep{mattei2015complexity} allow coalition players in their model to lower their probabilities arbitrarily, whereas we only allow coalition players to stay at their original winning probability or lower it to zero. However, observe that at any point during the tournament, we have the following. Fix some coalition player $c$ that is still competing.
\begin{align*}
  p(e^* \text{ wins overall}) = & \ p(e^* \text{ wins overall}\mid c \text{ wins current game})\cdot p(c \text{ wins current game}) \ +\\
  & \ p(e^* \text{ wins overall}\mid c \text{ loses current game}) \cdot p(c \text{ loses current game}).
\end{align*}
It follows that if $p(e^* \text{ wins overall}\mid c \text{ wins current game}) < p(e^* \text{ wins overall}\mid c \text{ loses current game})$, then it is always best for $c$ to intentionally lose the current game, that is, lower the winning probability to zero. Otherwise, it is never non-optimal for the coalition player to stay at its original winning probability. We can conclude that allowing coalition players to lower their winning probabilities arbitrarily does not generalize the model.

\subsection{Best Response}

We remark that a strategy $\xi$ presumably cannot be encoded in polynomial space. As we will show in \cref{thm:PH}, \basicproblemabbrv is presumably not contained in NP. However, in application settings, we might want to output an actual strategy that the coalition players can use rather than only solving the decision problem.
To this end, we introduce the so-called \emph{best response} problem. Intuitively, given a certain round seeding $s_k$ for a tournament round $k$, the best response tells the coalition players which strategy to use in this round to maximize the winning probability of $e^*$. After the first $k$ rounds of the tournament are played out, we can then compute the best response for the next round. This should give $e^*$ the optimal chance to win the tournament. 

To define the best responses formally, we need to know what the best possible winning probability for $e^*$ is in a given tournament. Let $N$ be a set of players (with $e^*\in N$), let $C\subseteq N$ be a coalition, let $P_N$ be a probability matrix, and let $s_1$ be a tournament seeding. We define the \emph{best possible winning probability}
\[
t_\text{opt}=\argmax_{t\in[0,1]} \ \{(N,C,e^*,P_N,s_1,t) \text{ is a yes-instance}\}.
\]
The best possible winning probability $t_\text{opt}^{k}(s_k)$ after round $k$ and a given seeding $s_k$ for the $k$th round is the best possible winning probability of the remaining tournament. Formally, we remove all players that are knocked out from the tournament, and we remove all $\bot$ entries from the seeding for the $k$th round to obtain a new (smaller) instance of \basicproblemabbrv. 

Next, we define a \emph{strategy profile}. It is a function $c:C\rightarrow \{0,1\}$. If $c(e)=0$ for some $e\in C$, we interpret this as $e$ manipulating (that is, intentionally losing) when strategy profile $c$ is used in a certain round of the tournament. A \emph{best response} for a tournament with players $N$, coalition~$C$, probability matrix~$P_N$, and tournament seeding $s_1$ is a strategy profile $c_\text{best}$ with the following property. Let $p(c,s_1,s_2)$ denote the probability that $s_2$ is obtained as a seeding for the second round of the tournament when strategy profile $c$ and tournament seeding $s_1$ is used in the first round. Then the strategy profile $c_\text{best}$ is a best response if we have
\[
\sum_{s_2} p(c_\text{best},s_1,s_2)\cdot t_\text{opt}^{(2)}(s_2) = t_\text{opt}.
\]

Now, we can formulate the computational problem of computing a best response.
\problemdeftask{\basicbestresponse (\basicbestresponseabbrv)}{A set of players $N$, a coalition $C\subseteq N$, a favorite player $e^*\in N$, a probability matrix $P_N$, and a tournament seeding $s_1$.}{Compute a best response $c_\text{best}$ for the first round of the tournament.}

Note that \basicbestresponseabbrv is not formulated as a decision problem. 

\subsection{Non-Adaptive Problem}
In the non-adaptive problem, all coalition players have to specify which games they intend to lose before the tournament starts. We follow the definition of Mattei et al.~\citep{mattei2015complexity}. A \emph{non-adaptive strategy} for the coalition players is a function $\xi:C\times N \rightarrow \{0,1\}$ that takes a coalition player and a player, and outputs zero or one. 
Let $e$ be a coalition player and let $e'$ be another player. If $\xi(e,e')=0$, then player $e$ intentionally loses their game against $e'$ (if it happens in the tournament). Otherwise, player $e$ tries to win the game against $e'$. If two coalition players play against each other, then not both of them can intentionally lose the game.
\problemdef{\nonadaptproblem (\nonadaptproblemabbrv)}{A set of players $N$, a coalition $C\subseteq N$, a favorite player $e^*\in N$, a probability matrix $P_N$, a tournament seeding $s_1$, and a probability threshold $t\in[0,1]$.}{Is there a non-adaptive strategy $\xi$ for the coalition players such that when applied the winning probability of $e^*$ (for the whole tournament) is at least $t$?}

\subsection{Generalized Problem}

Furthermore, we investigate a generalized version of the problem where we do not require the tournament to be balanced. Formally, we allow as a tournament tree every ordered rooted binary tree, that is, an ordered rooted tree where every vertex either has two descendants or is a leaf. This implies that the tournament tree is not solely defined by the player set and the tournament seeding anymore, but also by its structure. 
We define a \emph{generalized tournament tree} $T$ to be an ordered rooted binary tree where the leaves are labeled with a set of players $N$. Let the longest path in~$T$ from the root to a leaf vertex have length $r$. In this case, we say that the tournament has $r$ rounds.
As long as the generalized tournament tree has at least two leaves, every two players that are labels of leaves with maximum distance to the root vertex and with a common parent vertex in the tree play against each other, and the winner is promoted to the common parent; then, the leaves with maximum distance to the root are deleted from the tournament tree. Eventually, only one player remains, and this player is declared the winner.

For some round $k$, we call the vector $s\in(N\cup\{\bot\})^{n}$ containing the labels of the leaves (in some fixed ordering) after $k$ rounds of the tournaments are played (or $\bot$ in case less than $n$ leaves are left) the \emph{configuration of round $k$}. 
In the generalized setting, \emph{strategy} for the coalition players is a function $\xi:C\times [n]\times (N\cup\{\bot\})^{n}\rightarrow \{0,1\}$ that takes a coalition player, a round number, and a configuration for that round as input, and outputs zero or one. Again, the output zero is interpreted as the coalition player intentionally losing the game when facing the corresponding configuration in the corresponding round.

Formally, we define the problem as follows.
\problemdef{\generalizedproblem (\generalizedproblemabbrv)}{A generalized tournament tree $T$ with player set~$N$, a coalition $C\subseteq N$, a favorite player $e^*\in N$, a probability matrix $P_N$, and a probability threshold $t\in[0,1]$.}{Is there a strategy $\xi$ for the coalition players such that when applied the winning probability of $e^*$ (for the whole tournament) is at least $t$?}

We can, in an analogous way, also define a best response for the generalized setting. This leads to the problem \generalizedbestresponse (\generalizedbestresponseabbrv). 

\subsection{Classic and Parameterized Complexity}
We use the standard concepts and notations from classical complexity theory~\cite{AB09}. The complexity class PSPACE contains all problems for which there exists an algorithm that solves all instances in polynomial space. The complexity classes $\Sigma^P_k$ and $\Pi^P_k$ are inductively defined as follows. As the base case we have $\Sigma^P_0=\text{NP}$ and $\Pi^P_0=\text{coNP}$. For $k>0$ we have $\Sigma^P_k=\text{NP}^{\Sigma^P_{k-1}}$ and $\Pi^P_k=\text{coNP}^{\Sigma^P_{k-1}}$.
The polynomial hierarchy (PH) is the union of all complexity classes $\Sigma^P_k$ and $\Pi^P_k$, that is,
$\text{PH}=\bigcup_{k\in\mathbb{N}}(\Sigma^P_k\cup\Pi^P_k)$.

 We use the standard concepts and notations from parameterized complexity theory~\cite{downey2013fundamentals,CyganFKLMPPS15}.
A \emph{parameterized problem} $L\subseteq \Sigma^{*} \times \mathbb{N}$ is a subset of all instances $(x, k)$ from $\Sigma^{*} \times \mathbb{N}$, where $k$ denotes the \emph{parameter}. A parameterized problem $L$ is in the complexity class XP if there is an algorithm that solves each instance $(x,k)$ of $L$
 in $x^{f(k)}$ time, for some computable function $f$. Furthermore,~$L$ is in the class FPT (or fixed-parameter tractable), if there is an algorithm that decides every instance $(x, k)$ for $L$ in $f(k)\cdot|x|^{
O(1)}$ time, where $f$ is any
computable function that depends only on the parameter. If a
parameterized problem $L$ is W[1]-hard, then it is presumably
not fixed-parameter tractable.

\subsection{Random Game Covers}
In this section, we define a minimum \emph{random game cover} and explain how to compute it efficiently. Intuitively, a random game cover is a set of players such that for every game with a non-deterministic outcome, at least one player of the random game cover is involved. If the random game cover is empty, then all games are deterministic and \basicproblemabbrv and \generalizedproblemabbrv can be solved in polynomial time~\cite{russell2009manipulating}. Hence, the size of a minimum random game cover is a natural distance-from-triviality measure for our problems, and we will consider it as a parameterization to obtain tractability results. Formally, a (minimum) random game cover is defined as follows.

\begin{definition}\label{def:rgc}
A \emph{random game cover} for a tournament is a set $X\subseteq N$ of players such that for all pairs of players $i,j$ such that $0\neq p(i,j)\neq 1$ we have that $\{i,j\}\cap X\neq \emptyset$, that is, $i\in X$ or $j\in Y$. A \emph{minimum random game cover} is a random game cover of minimal cardinality.
\end{definition}

We can observe that computing a minimum random game cover is equivalent to computing a minimum vertex cover of the graph with the set of players as vertices and where two players $i,j$ are adjacent if and only if $0\neq p(i,j)\neq 1$. It is well-known that computing a minimum vertex cover can be done in (single exponential) FPT-time with respect to the size of a minimum vertex cover~\cite{CyganFKLMPPS15}. Hence, we get the following.

\begin{proposition}\label{prop:rgc}
    A minimum random game cover can be computed in $2^{O(x)}\cdot n^{O(1)}$ time, where $x$ is the size of a minimum random game cover.
\end{proposition}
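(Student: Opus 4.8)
The plan is to reduce the computation of a minimum random game cover to the computation of a minimum vertex cover, and then invoke the standard single-exponential fixed-parameter algorithm for \textsc{Vertex Cover}. Concretely, from the input tournament I would build the \emph{randomness graph} $G_P=(N,E_P)$, where $\{i,j\}\in E_P$ if and only if $i\neq j$ and $0\neq p(i,j)\neq 1$. This graph can be constructed in $O(n^2)$ time by a single pass over the probability matrix $P_N$.

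The first (and essentially only) conceptual step is to observe that a set $X\subseteq N$ is a random game cover if and only if $X$ is a vertex cover of $G_P$. This is immediate from \cref{def:rgc}: the requirement that for every pair $i,j$ with $0\neq p(i,j)\neq 1$ we have $\{i,j\}\cap X\neq\emptyset$ is literally the statement that $X$ meets every edge of $G_P$; pairs with $p(i,j)\in\{0,1\}$ contribute no edge and impose no constraint. Consequently, a minimum random game cover is exactly a minimum vertex cover of $G_P$, and in particular $x$ equals the vertex cover number of $G_P$.

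Next I would apply the classical bounded-search-tree algorithm for \textsc{Vertex Cover}: given a target size $k$, repeatedly pick an edge $\{u,v\}$ that is not yet covered and branch on adding $u$ or $v$ to the cover while decreasing the budget by one; this decides in $2^{k}\cdot n^{O(1)}$ time whether $G_P$ admits a vertex cover of size at most $k$ and outputs one if so. Since $x$ is not part of the input, I would run this procedure for $k=0,1,2,\ldots$ and stop at the first $k$ for which a cover of size $k$ is returned; that value is $x$ and the returned set is a minimum random game cover. The total running time is $\sum_{k=0}^{x} 2^{k}\cdot n^{O(1)} = 2^{O(x)}\cdot n^{O(1)}$, as claimed. (If one prefers a sharper bound, one may instead invoke the $O(1.2738^{x}+xn)$-time algorithm of Chen, Kanj, and Xia, but the simple branching already suffices for the stated bound.)

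There is no real obstacle here; the argument is entirely routine. The only points requiring any care are verifying that the equivalence with \textsc{Vertex Cover} holds verbatim from the definition, and handling the fact that $x$ is not given as part of the input, which is dealt with by the iterative-deepening loop described above.
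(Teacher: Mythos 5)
Your proposal is correct and follows exactly the paper's argument: reduce to \textsc{Vertex Cover} on the graph whose edges are the pairs $i,j$ with $0\neq p(i,j)\neq 1$, and apply the standard single-exponential FPT algorithm (your iterative-deepening loop to handle the unknown $x$ is a routine detail the paper leaves implicit). No gaps.
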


\section{Hardness Results}\label{sec:hardness}

In this section, we present our (parameterized) computational hardness results. We first show that \basicproblemabbrv is hard for every complexity class in the polynomial hierarchy (PH). This implies that there presumably is no polynomial-sized witness for yes-instances of the problem. Hence it is unlikely, that a strategy for the coalition players can be encoded in polynomial space. We further show that the generalized problem \generalizedproblemabbrv is PSPACE-complete.
Moreover, we show that \generalizedproblemabbrv is NP-hard and W[1]-hard when parameterized by the coalition size, even if the tournament has a random game cover of size two.
Finally, we show that \basicbestresponseabbrv and \nonadaptproblemabbrv are NP-hard.

We remark that in all our hardness reductions, the entries in the probability matrix of the produced instances are either $0$, $1$, or $\frac{1}{2}$, and we ask whether $e^*$ can win the overall tournament with probability $1$.

\subsection{Classical Hardness Results}

To show that \basicproblemabbrv is hard for each class of the PH, we present a reduction from \textsc{Quantified Boolean Formula} with a constant but arbitrary number of quantifier alternations~\cite{AB09,Sto76}.

\begin{theorem}\label{thm:PH}
\basicproblemabbrv is hard for each class in the PH even if the probability threshold $t$ equals one.
\end{theorem}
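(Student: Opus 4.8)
The plan is to reduce from \textsc{Quantified Boolean Formula} (QBF) with a constant number $q$ of quantifier alternations, say $\exists X_1 \forall X_2 \exists X_3 \cdots Q_q X_q\, \varphi$ where $\varphi$ is a 3-CNF formula; since $q$ is a fixed constant, membership of the resulting problem in $\Sigma^P_q$ (or $\Pi^P_q$) together with hardness for QBF$_q$ gives hardness for the corresponding PH class, and ranging over all $q$ gives hardness for every class in PH. The core building block I would construct is a small constant-size \emph{selection gadget}: a balanced subtournament containing exactly one coalition player $c$ together with a few non-coalition players, engineered (via $0/1$ outcomes for the deterministic matches and the adaptive loss-decisions of $c$) so that $c$ can steer which of a designated set of ``output'' players emerges as the winner of the subtournament. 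I would use such gadgets as \emph{variable gadgets} (the coalition player picks the truth value of an existentially quantified variable, i.e.\ which of two ``literal-carrying'' players advances) and as \emph{clause gadgets} (the coalition player picks which of the three literals of a clause is its witness).

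Next I would describe the global layout of the tournament tree. Because the tree must be balanced, I would place the variable gadgets for $X_1$ near the ``top'' (closest to the root among the gadget layers), then the players encoding the universally quantified block $X_2$ — each such player advances to the next layer with probability $\tfrac12$ independently via a match against a dummy, so that \emph{every} assignment to $X_2$ occurs with positive probability and the coalition, acting adaptively, only fixes the $X_3$-gadgets \emph{after} observing which $X_2$-players survived — then the $X_3$ variable gadgets, and so on, alternating down through all $q$ blocks; the clause gadgets sit in a parallel part of the bracket. All gadget winners are funneled, via deterministic matches, into a common semifinal: on one side the surviving ``variable-side'' player (carrying a variable together with a truth value), on the other side the surviving ``clause-side'' player (carrying a clause together with a chosen literal). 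The semifinal is rigged so that the clause player beats the variable player with probability one \emph{iff} that clause player's chosen literal is over the variable player's variable and the variable player's truth value falsifies that literal; otherwise the variable player wins with probability one. In the final the survivor meets $e^*$, who beats any variable player with probability one but loses to any clause player with probability one. Thus $e^*$ wins the whole tournament with probability one iff no clause player can ever reach the final, i.e.\ iff the coalition's adaptive choices in the variable gadgets (for the existential blocks, and forced/irrelevant for the universal blocks which are randomized rather than chosen) induce, for every outcome of the universal randomization, an assignment satisfying $\varphi$ together with, for every clause, a choice of satisfied literal in its clause gadget.

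After setting this up I would carry out the proof in the following order: (i) formally specify the selection gadget and prove the ``the coalition player can realize exactly the output players in $S$, and nothing outside $S$'' lemma — a finite case analysis over the constant-size bracket; (ii) specify the full tournament tree, the seeding, and the probability matrix, checking that padding with dummy players keeps it balanced while contributing only probability-one matches that do not interfere; (iii) prove the forward direction: from a QBF winning strategy for the $\exists$-player, read off an adaptive coalition strategy (existential gadgets implement the strategy function of the surviving universal-block players, clause gadgets pick a literal satisfied under the chosen assignment) and verify $e^*$ wins with probability one; (iv) prove the converse: from any adaptive coalition strategy achieving probability one, argue that in \emph{every} branch of the universal randomization the variable gadgets must be self-consistent (no variable assigned both ways across gadgets — enforce this by giving each variable its own gadget and having all its literal-players cross-check, or by structuring the bracket so inconsistency lets a clause player through) and must satisfy every clause, yielding a QBF strategy; (v) observe the tournament has size $2^{O(q)}\cdot\mathrm{poly}(|\varphi|)$, hence polynomial for fixed $q$, and that the instance uses only probabilities $\{0,\tfrac12,1\}$ and threshold $t=1$.

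The main obstacle, I expect, is step (iv) — enforcing \emph{consistency} of the implicitly chosen assignment. A single coalition player in one variable gadget cannot directly ``remember'' choices made elsewhere, and with an adaptive strategy one must rule out the coalition cheating by, e.g., having different literal-players for the same variable carry conflicting values, or by exploiting the universal randomization to satisfy $\varphi$ only on a strict subset of branches. The clean fix is architectural: give each existential variable exactly one variable gadget whose single output player encodes the chosen value, route that one player to face every clause player that mentions the variable (possibly via repeated deterministic ``relay'' matches, which a balanced bracket can accommodate with enough dummy padding), so that a falsified literal anywhere forces a clause player into the semifinal; and make the universal players' matches the \emph{only} source of randomness, so that ``$e^*$ wins with probability one'' is equivalent to ``$e^*$ wins in every branch,'' which is exactly the universal quantifier. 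Getting this routing to fit inside a \emph{complete} binary tree of depth $O(q + \log|\varphi|)$, while every added match is deterministic and harmless, is the delicate bookkeeping that the bulk of the proof will need to handle carefully.
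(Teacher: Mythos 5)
Your overall architecture matches the paper's: a reduction from QBF with a fixed number of alternations, constant-size selection gadgets in which a single coalition player steers which output player wins (used for existential variables and for clauses), fair coin flips for universal variables so every assignment occurs with positive probability, staggered rounds so later existential gadgets can react adaptively to earlier universal outcomes, a variable-player-versus-clause-player semifinal encoding ``the chosen literal is falsified,'' and a final against $e^*$ with threshold $t=1$. The ``consistency'' worry you flag largely dissolves once each variable has exactly one gadget, as you yourself conclude.

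The genuine gap is in how all (variable, clause) pairs get checked. You propose to funnel gadget winners to the semifinal ``via deterministic matches'' and to treat the universal players' coin flips as ``the only source of randomness.'' With deterministic funneling, exactly one variable player and one clause player reach the semifinal on each branch of the universal randomness, so ``$e^*$ wins with probability one'' certifies only that single pair --- not that every clause's chosen literal is satisfied and not that every clause player is blocked. Your fallback of ``relay matches'' routing one variable player past every clause player that mentions its variable does not fit a single-elimination bracket: a player's sequence of opponents is determined by who survives on the other side of each match, the variable and clause players sit in disjoint subtrees until the semifinal, and a clause player eliminated early by a relay would have to lose to someone, destroying the intended semantics. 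The paper's device, which your write-up is missing, is to make \emph{all} matches among distinct variable players, and all matches among distinct clause players, fair coin flips: then every variable-gadget winner and every clause-gadget winner reaches the semifinal with positive probability, and the threshold $t=1$ converts the semifinal test into a universal check over all pairs simultaneously. A secondary slip: you place the $X_1$ gadgets closest to the root and later blocks farther down, but matches nearer the root are played \emph{later}, so as described the $\exists X_1$ decision would be made after observing the $\forall X_2$ outcomes --- the staggering must go the other way (the paper starts block $i$'s gadget games at round $3i-3$, deepest first, by padding below with dummies).
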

\begin{proof}
Fix some $k\in \mathbb{N}$. We present a polynomial-time reduction from \textsc{Quantified Boolean Formula} with $k$ quantifier alternations to \basicproblemabbrv. This shows that \basicproblemabbrv is $\Sigma^P_k$ or $\Pi^P_k$-hard, depending on whether the first quantifier is $\exists$ or $\forall$~\cite{AB09,Sto76}.

Assume we are given a quantified Boolean formula 
\begin{align*}
\exists x_{1,1},\ldots,x_{1,n} \ \forall y_{2,1},\ldots,y_{2,n} \ \exists x_{3,1},\ldots,x_{3,n} \ \ldots \ \forall y_{k,1},\ldots,y_{k,n} \\
\phi(x_{1,1},\ldots,x_{1,n},y_{2,1},\ldots,y_{2,n},x_{3,2},\ldots,x_{3,n},\ldots,y_{k,1},\ldots,y_{k,n}).
\end{align*}

Note that here we assume w.l.o.g.\ that for each quantifier, we have $n$ variables and that $\phi$ is given in conjunctive normal form with clauses of size exactly three. 
We use the letter $x$ to denote existentially quantified variables and the letter $y$ to denote universally quantified variables. The subscript $i,j$ indicates that it is the $j$th variable quantified by the $i$th quantifier.  

Furthermore, for the construction described in the reduction, we assume that the first quantifier of the formula is $\exists$ and that the number of quantifier alternations is odd. Note that if the first quantifier is $\forall$, then we can add a set of existentially quantified variables as the first ones, that do not appear in the formula. Similarly, if the number of quantifier alternations is even (and the first quantifier is existential), then we can add a set of universally quantified variables as the last set of quantified variables, that do not appear in the formula. Note that this only increases the size of the formula by at most a linear factor, and it increases the number of quantifier alternations by at most two.

We construct an instance of \basicproblemabbrv as follows. We start by describing three simple gadgets: one for existentially quantified variables, another for universally quantified variables, and a third for clauses. 

\subparagraph{Existential Variable Gadget.} Let $x_{i,j}$ be an existentially quantified variable. Then, we create four players:
\begin{itemize}
\item A player $x^T_{i,j}$, representing that variable $x_{i,j}$ is set to true.
\item A player $x^F_{i,j}$, representing that variable $x_{i,j}$ is set to false.
\item A player $q_{i,j}$. This player is in the coalition and, informally, will be able to decide the truth assignment of the variable.
\item A player $d_{i,j}$, which is a dummy player that we need for technical reasons.
\end{itemize}

 \begin{figure}[t]
 \centering
    \includegraphics[scale=1]{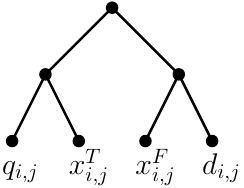}
    \caption{An Existential Variable Gadget $T^{\exists}_{i,j}$.} 
    \label{fig4}
\end{figure}

 We call $x^T_{i,j}$ and $x^F_{i,j}$ ``variable players'', and we set
\begin{multicols}{2}
\begin{itemize}
    \item $p(q_{i,j},x^T_{i,j})=1$,
    \item $p(x^F_{i,j},d_{i,j})=1$,
    \item $p(x^T_{i,j},x^F_{i,j})=1$, and
    \item $p(x^F_{i,j},q_{i,j})=1$.
\end{itemize}
\end{multicols}
The remaining win probabilities are set arbitrarily, say to~$\frac{1}{2}$.  

Now we create a tournament $T^{\exists}_{i,j}$ with the four players and seed $(q_{i,j},x^T_{i,j},x^F_{i,j},d_{i,j})$. See \cref{fig4} for an illustration of the construction of an existential variable gadget.
\begin{claim}\label{claim:existential}
    Either player $x^T_{i,j}$ or player $x^F_{i,j}$ wins the tournament $T^{\exists}_{i,j}$, and the coalition player $q_{i,j}$ can decide which one. 
\end{claim}
\begin{claimproof}
    In any case, $x^F_{i,j}$ beats $d_{i,j}$ in the first round and advances. If $q_{i,j}$ beats $x^T_{i,j}$ in the first round, then $x^F_{i,j}$ beats $q_{i,j}$ in the finals and wins the tournament $T^{\exists}_{i,j}$. If $q_{i,j}$ lets $x^T_{i,j}$ win in the first round, then $x^T_{i,j}$ beats $x^F_{i,j}$ in the finals and wins the tournament.
\end{claimproof}

\subparagraph{Universal Variable Gadget.} Let $y_{i,j}$ be an universally quantified variable. Then, we create two players:
\begin{itemize}
\item A player $y^T_{i,j}$, representing that variable $y_{i,j}$ is set to true.
\item A player $y^F_{i,j}$, representing that variable $y_{i,j}$ is set to false.
\end{itemize}

We also call $y^T_{i,j}$ and $y^F_{i,j}$ ``variable players''. We set $p(y^T_{i,j},y^F_{i,j})=\frac{1}{2}$. Now we create a tournament $T^{\forall}_{i,j}$ with the two players and seed $(y^T_{i,j},y^F_{i,j})$. 

\begin{claim}\label{claim:universal}
    Players $y^T_{i,j}$ and $y^F_{i,j}$ win $T^{\forall}_{i,j}$ with probability $\frac{1}{2}$. 
\end{claim}

\subparagraph{Clause Gadget.} Let $c$ be a clause. Then, we create 8 players:
\begin{itemize}
\item Three players $c_1, c_2, c_3$, representing the three literals of the clause.
\item A player $q_c$. This player is in the coalition and, informally, will be able to select a literal of the clause.
\item Four dummy players $d_{c,1},d_{c,2},d_{c,3},d_{c,4}$ which we need for technical reasons.
\end{itemize}

 \begin{figure}[t]
 \centering
    \includegraphics[scale=1]{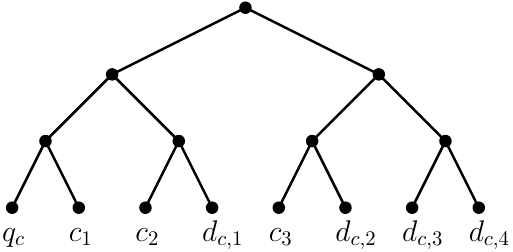}
    \caption{A Clause Gadget $T_{c}$ corresponding to a clause $c$.} 
    \label{fig3}
\end{figure}

 We call $c_1$, $c_2$, and $c_3$ ``clause players'', and we set
\begin{multicols}{2}
\begin{itemize}
    \item $p(q_c,c_1)=1$,
    \item $p(c_2,d_{c,1})=1$,
    \item $p(c_3,d_{c,2})=1$,
    \item $p(d_{c,3},d_{c,4})=1$,
    \item $p(c_1,c_2)=1$,
    \item $p(q_c,c_2)=1$,
    \item $p(c_3,d_{c,3})=1$,
    \item $p(c_1,c_3)=1$, 
    \item $p(c_2,c_3)=1$, and
    \item $p(c_3,q_c)=1$.
\end{itemize}
\end{multicols}

The remaining win probabilities are set arbitrarily, say to $\frac{1}{2}$. 

Now we create a tournament $T_c$ with the eight players and seed $(q_c,c_1,c_2,d_{c,1},c_3,d_{c,2},d_{c,3},d_{c,4})$. See \cref{fig3} for an illustration of the construction of a clause gadget. 
\begin{claim}\label{claim:clause}
    One of the players $c_1,c_2,c_3$ wins the tournament  $T_c$, and coalition player $q_{c}$ can decide which one.
\end{claim}
\begin{claimproof}
 The idea is essentially the same as in the existential variable gadget, and the proof is similar to the one of \cref{claim:existential}.
    In any case, $c_2$ beats $d_{c,1}$ in the first round and advances, and~$c_3$ beats $d_{c,2}$ in the first round and advances. Furthermore, $d_{c,3}$ beats $d_{c,4}$ in the first round and advances.
    In the second round, $c_3$ beats $d_{c,4}$ and advances.
    Note that $q_c$ can beat $c_1$ and $c_2$ with probability one but loses against $c_3$ with probability one.
    If $q_{c}$ lets $c_1$ win in the first round, then $c_1$ goes on to beat $c_2$ in the second round and $c_3$ in the final round and wins the tournament. 
    If $q_{c}$ beats $c_1$ in the first round and lets $c_2$ win in the second round, then $c_2$ goes on to beat $c_3$ in the final round and wins the tournament.
    If $q_{c}$ beats $c_1$ in the first round and beats $c_2$ in the second round, then $q_c$ loses to $c_3$ in the final round, and $c_3$ wins the tournament.
\end{claimproof}



This finishes the description of the gadgets. Furthermore, we introduce our favorite player $e^*$, and we assume that there are sufficiently many dummy players. Now, we set the following winning probabilities.
\begin{itemize}
    \item Variable players beat other variable players with probability $\frac{1}{2}$.
    \item Clause players beat other clause players with probability $\frac{1}{2}$. 
    \item Dummy players beat non-dummy players with probability zero.
    \item A clause player beats a variable player with probability one if the clause player represents a literal involving the variable associated with the variable player, and the truth assignment represented by the variable player does \emph{not} satisfy the literal.

    Otherwise, the variable player beats the clause player with probability one.
    \item Player $e^\star$ beats variable players with probability one and clause players with probability zero.
\end{itemize}

The remaining win probabilities are set arbitrarily, say to $\frac{1}{2}$. We are doing this because when two players whose probabilities have not been previously assigned compete, we ensure that each player has a non-zero probability of winning that match.


Now, we create an existential variable gadget for each existentially quantified variable, a universal variable gadget for each universally quantified variable, and a clause gadget for each clause. In the following, we describe how we arrange these gadgets to create one large tournament.

\subparagraph{Enlarging Gadgets.}
Let $T$ be one of the previously described gadgets and let $t$ denote its height, that is, $t=1$ if $T$ is a universal variable gadget, $t=2$ if $T$ is an existential variable gadget, and $t=3$ if $T$ is a clause gadget. Let $r_1,r_2$ be two positive integers with $r_1+t\le r_2$. Then, intuitively, we want $T(r_1,r_2)$ to be a tournament of height $r_2$ such that the games in $T$ start in round $r_1$ and the winner of $T$ also wins $T(r_1,r_2)$. We do this by introducing sufficiently many dummy players.

Formally, let $N$ denote the set of players in $T$ and let $\sigma:N\rightarrow [2^t]$ denote the seed (note that $|N|=2^t$).
Then first, for every player $i\in N$, we introduce $2^{r_1}-1$ dummy players in $T(r_1, r_2)$. We place those in the seed between player $i$ and player $j$ with $\sigma(j)=\sigma(i)+1$, or to the right of player~$i$ if $\sigma(i)=2^t$.
Next, we introduce $2^{r_2}-2^{r_1+t}$ additional dummy players in $T(r_1, r_2)$ and place them in the seed to the right of player~$i$ with $\sigma(i)=2^t$. This leads to $T(r_1, r_2)$ having a total of $2^{r_2}$ players and consequently a height of $r_2$.  Formally, we have the following.

\begin{claim}\label{claim:enlarge}
    Let $T$ be an existential, universal, or clause gadget. For every strategy of the coalition and every $r_1,r_2$, we have that player $e$ wins $T$ if and only if $e$ wins $T(r_1,r_2)$. Furthermore, if $T$ is an existential or a universal gadget, then the winner of $T(r_1,r_2)$ is determined in round $r_1+3$.
\end{claim}
\begin{claimproof}
Let $N$ denote the set of players in $T$ and let $t$ denote the number of rounds of~$T$. By construction of the enlarging gadget, we have the following: Assume player $i$ and player $j$, with $i,j\in N$, play in round $r$ of $T$, and both of them are not dummy players. Then now, they play in round $r_1+r$ as in the rounds preceding the first round of $T$, they only play against newly added dummy players. It follows that the player $e$ who wins $T$ advances to round $r_1+t$. Moreover, after round $r_1+t$, player $e$ only encounters dummy players. It follows that $e$ also wins $T(r_1,r_2)$. Furthermore, we have that if $T$ is an existential or universal gadget, then $t\le 3$. It follows that in this case, the winner of $T$ (and hence $T(r_1,r_2)$) is determined in round $r_1+3$.
%
\end{claimproof}

Now, we create enlarged variable gadgets and enlarged clause gadgets. 
\begin{itemize}
    \item Let $x_{i,j}$ be an existentially quantified variable, then we create a subtournament $T^{\exists}_{i,j}(3i-3,3k+3)$.
    \item Let $y_{i,j}$ be a universally quantified variable, then we create a subtournament $T^{\forall}_{i,j}(3i-3,3k+3)$.
    \item Let $c$ be a clause, then we create a subtournament $T_{c}(3k,3k+3)$.
\end{itemize}

Let $m$ denote the number of clauses. Note that in total, we have $m+kn$ subtournaments, each of height $3k+3$, that is, each subtournament has $2^{3k+3}$ players. Let $\ell$ be the smallest integer such that $2^{3k+3}(m+kn)\le 2^{\ell}$. Denote $n^\star=2^\ell$. 

Now, we arrange the subtournaments corresponding to enlarged variable gadgets next to each other and add $n^\star-2^{3k+3}kn$ dummy players. Let $T_\text{variables}$ denote the resulting subtournament. Note that $T_\text{variables}$ has $n^\star$ players and height $\ell$. 
We also arrange the subtournaments corresponding to enlarged clause gadgets next to each other and add $n^\star-2^{3k+3}m$ dummy players. Let $T_\text{clauses}$ denote the resulting subtournament. Note that $T_\text{clauses}$ also has $n^\star$ players and height $\ell$ (same as~$T_\text{variables}$).

Finally, we create the overall tournament $T^\star$ by putting $T_\text{variables}$ and 
$T_\text{clauses}$ next to each other, such that the respective winners play against each other, and then we add $e^*$ and $2n^\star-1$ dummy players. See \cref{fig5} for an illustration. 
For all players $a,b$ for which we have not yet specified winning probabilities, we set $p(a,b)=p(b,a)=\frac{1}{2}$.
Note that $T^\star$ has $4n^\star$ players. Hence, the number of players is in $2^{O(k)}(n+m)$. We ask whether there exists a strategy for the coalition players such that $e^*$ wins the tournament with probability one.

 \begin{figure*}[t]
 \centering
    \includegraphics[scale=0.8]{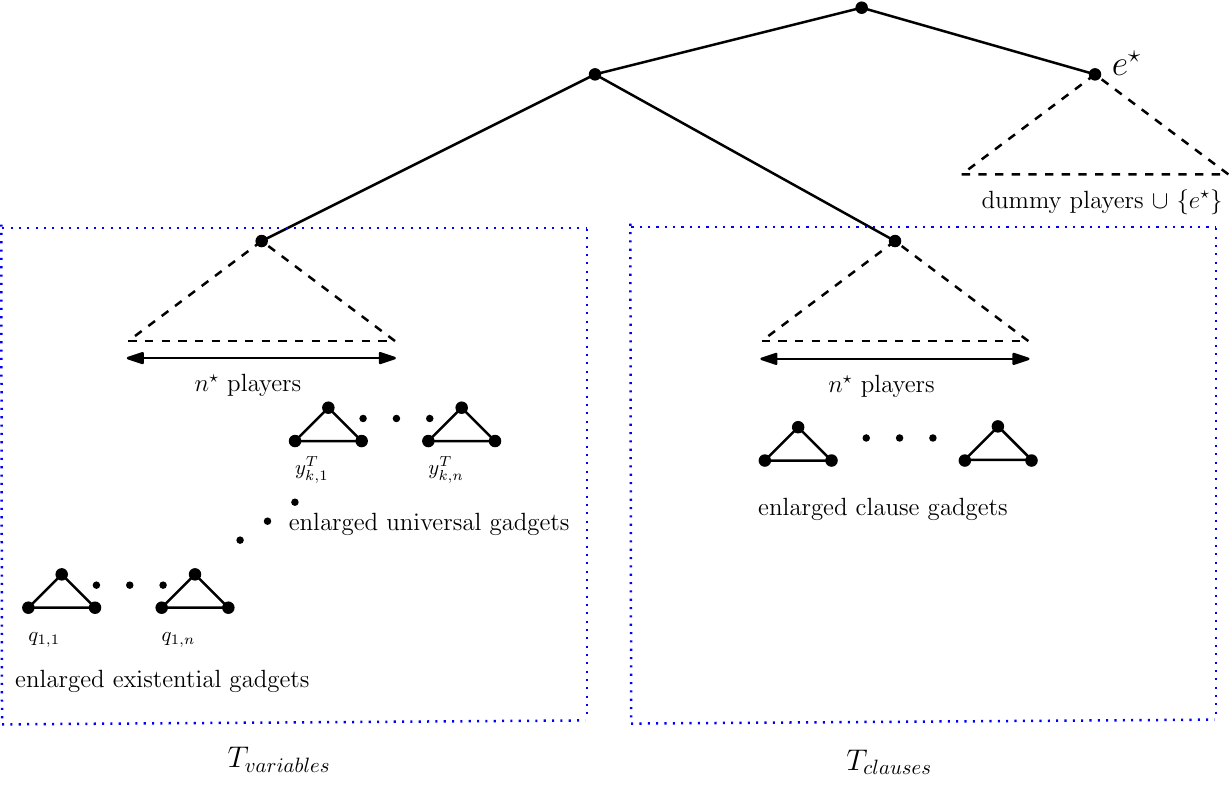}
    \caption{A schematic depiction of different types of gadgets, including enlarged versions of existential gadgets, universal gadgets, clause gadgets, dummy players, and $e^{\star}$, arranged together in tournament $T^{\star}$. Here, only the initial set of existential gadgets and the final set of universal gadgets, which alternate in the arrangement, are shown.} 
    \label{fig5}
\end{figure*}

This finishes the construction, which can clearly be computed in polynomial time for every constant $k$. 
\smallskip

\noindent\emph{Correctness}. In the remainder, we show that the constructed instance is a yes-instance if and only if the \textsc{Quantified Boolean Formula} instance is true. 

\smallskip

$(\Rightarrow)$: Assume that the \textsc{Quantified Boolean Formula} instance is a yes-instance. We create a strategy for the coalition players that leads $e^*$ to win the constructed tournament with probability one as follows. Recall that we assumed that the number $k$ of quantifier alternations is odd and the first quantifier is existential.

Consider the first batch of existentially quantified variables and let $x_{1,j}$ be one of those variables. 
Consider the existential gadget $T^\exists_{1,j}$ which is embedded in the subtournament $T^\exists_{1,j}(0,3k+3)$. By \cref{claim:enlarge}, we know that the player winning $T^\exists_{1,j}(0,3k+3)$ is the player winning $T^\exists_{1,j}$, and by \cref{claim:existential} we know that $q_{1,j}$ can decide which player wins $T^\exists_{1,j}$.
If in the solution for the \textsc{Quantified Boolean Formula} instance variable $x_{1,j}$ is set to true, we choose a strategy for $q_{1,j}$ that ensures that variable player $x^T_{1,j}$ wins $T^\exists_{1,j}$, otherwise we choose a strategy for $q_{1,j}$ that ensures that variable player $x^F_{1,j}$ wins $T^\exists_{1,j}$.

Now consider the second batch of existentially quantified variables and let $x_{3,j}$ be one of those variables. 
Consider the existential gadget $T^\exists_{3,j}$ which is embedded in the subtournament $T^\exists_{3,j}(6,3k+3)$.
Note that the universal gadgets $T^\forall_{2,j}$ are embedded in the subtournaments $T^\forall_{2,j}(3,3k+3)$, and by \cref{claim:enlarge}, we have that at round 6 (when the subtournament $T^\exists_{3,j}$ embedded in  $T^\exists_{3,j}(6,3k+3)$ starts) it is determined which players win the subtournaments $T^\forall_{2,j}(3,3k+3)$ corresponding to the first batch of universal quantified variables.
We create an assignment for the variables in the first batch of universal quantified variables as follows. If the variable player $y^T_{2,j}$ wins the subtournament $T^\forall_{2,j}(3,3k+3)$, then we set the variable $y_{2,j}$ to true, otherwise we set the variable $y_{2,j}$ to false. In the solution to the \textsc{Quantified Boolean Formula} instance, there exists an assignment for the second batch of existential quantified variables for the produced assignment for the first batch of universally quantified variables. We use strategies for the coalition players in the subtournaments $T^\exists_{3,j}(6,3k+3)$ that are derived from the assignment for the second batch of existentially quantified variables analogously to the first one described above.

We continue this process until we have chosen strategies for all coalition players in subtournaments corresponding to enlarged existential gadgets.

It remains to specify a strategy for the coalition players in the subtournaments $T_{c}(3k,3k+3)$ for the clauses $c$. Note that the subtournaments corresponding to the clause gadgets start in round~$3k$. By construction of the tournament and \cref{claim:enlarge}, we have that in this round, all winners of subtournaments corresponding to the variables (both existentially and universally quantified) are determined. These correspond to a satisfying assignment for all variables in the \textsc{Quantified Boolean Formula} instance, hence, for every clause in the formula, we have that at least one literal is satisfied. Let $c$ be a clause satisfied by its $j$th literal. Then we choose a strategy for $q_c$ that ensures that clause player $c_j$ wins the subtournament $T_{c}(3k,3k+3)$, which we can do by \cref{claim:clause}. 

This finishes the description of the strategy for the coalition players. We now prove that $e^*$ wins the constructed tournament with probability one when the described strategy is applied.

Note that by construction, player $e^*$ wins against all variable players and all dummy players with probability one, and loses against all clause players with probability one. Furthermore, by construction, we have that player $e^*$ only encounters dummy players until the final game of the tournament. Now assume for contradiction that player $e^*$ does not win the tournament with probability one. Then, there exists a clause player $c_j$ for some clause $c$ and $j\in\{1,2,3\}$ that can reach the final game with non-zero probability. By construction of the tournament, player $c_j$ plays against a variable player in the semi-final of the tournament. Furthermore, we have that a clause player beats a variable player with probability one only if the clause player corresponds to a literal that involves the variable corresponding to the variable player, and the variable player represents a truth assignment to the variable that does not satisfy the literal.

However, the fact that $c_j$ won the subtournament $T_{c}(3k,3k+3)$ means that the $j$th literal of clause $c$ is satisfied by the assignment to the formula. This is a contradiction to the assumption that player $c_j$ can beat a variable player.

\smallskip

$(\Leftarrow)$: Assume that there exists a strategy for the coalition players that leads $e^*$ to win the constructed tournament with probability one.
We prove that then the \textsc{Quantified Boolean Formula} instance is a yes-instance. Recall that we assume that the number $k$ of quantifier alternations is odd and the first quantifier is existential.

Consider the first batch of existentially quantified variables and let $x_{1,j}$ be one of those variables. 
Consider the subtournament $T^\exists_{1,j}(0,3k+3)$. If player $x^T_{1,j}$ wins the subtournament $T^\exists_{1,j}(0,3k+3)$, then we set $x_{1,j}$ to true; otherwise, we set it to false.

Now consider some assignment for the variables in the first batch of universally quantified variables. By \cref{claim:universal}, the following happens with a non-zero probability: for all universally quantified variables in the first batch,  if $y_{2,j}$ is set to true, then player $y^T_{2,j}$ wins the subtournament $T^\forall_{2,j}(3,3k+3)$; if $y_{2,j}$ is set to false, then player $y^F_{2,j}$ wins the subtournament $T^\forall_{2,j}(3,3k+3)$. 

By \cref{claim:existential,claim:enlarge}, it follows that for this scenario, the strategy for the coalition players determines the winners of the subtournaments $T^\exists_{3,j}(6,3k+3)$. If player $x^T_{3,j}$ wins the subtournament $T^\exists_{3,j}(6,3k+3)$, we set $x_{3,j}$ to true for the given assignment of the first batch of universally quantified variables, otherwise, we set $x_{3,j}$ to false. Now, we continue analogously with the remaining variables. This way, we can produce all possible assignments that have to satisfy the formula of the \textsc{Quantified Boolean Formula} instance.

Assume for contradiction that one of the assignments produced this way does not satisfy the formula of the \textsc{Quantified Boolean Formula} instance. Then, there exists a clause $c$ that is not satisfied by the assignment. By \cref{claim:clause,claim:enlarge}, one of the players $c_1,c_2,c_3$ wins the subtournament $T_c(3k,3k+3)$. Assume w.l.o.g.\ that this player is $c_1$ (the cases where it is $c_2$ or $c_3$ are analogous) and let the existentially quantified variable $x_{i,j}$ be the variable appearing in the first literal of clause $c$ (the case where it is a universally quantified variable is analogous). Assume w.l.o.g.\ that the variable appears non-negated (the case where it appears negated is analogous).
Then we have that in the assignment $x_{i,j}$ is set to false, which means that $x^F_{i,j}$ won the subtournament $T^\exists_{i,j}(3i-3,3k+3)$.

Then, by construction, the following happens with non-zero probability:
Clause player $c_1$ wins subtournament $T_{\text{clauses}}$ and variable player $x^F_{i,j}$ wins the subtournament $T_{\text{variables}}$. Then, by construction, $x^F_{i,j}$ beats $c_1$ in the semi-finals of the tournament, and $x^F_{i,j}$ advances to the finals. Then, in the final match, player $x^F_{i,j}$ beats player $e^*$. This is a contradiction to the assumption that $e^*$ wins the tournament with probability one.
\end{proof}

We remark that the reduction in the proof of \cref{thm:PH} also produces a polynomial-sized instance of \basicproblemabbrv if the number of quantifier alternations in the \textsc{Quantified Boolean Formula} instance is polylogarithmic. Hence, we can get a stronger computational hardness result.
\begin{remark}
\basicproblemabbrv is hard for a complexity class that has \textsc{Quantified Boolean Formula} with polylogarithmically many quantifier alternations as its canonical complete problem. This complexity class contains PH and is contained in PSPACE.
\end{remark}
However, for polynomially many quantifier alternations, the produced instance becomes exponentially large. This is mainly due to the requirement that the tournament tree is a complete (balanced) binary tree. In the general problem \generalizedproblemabbrv, we do not have this requirement, and instances are allowed to have imbalanced tournament trees. Hence, we can straightforwardly modify the reduction in the proof of \cref{thm:PH} to produce instances of \generalizedproblemabbrv that only have a polynomial number of additional dummy players. It follows that the produced instances have polynomial size, even if the \textsc{Quantified Boolean Formula} instance has polynomially many quantifier alternations. Since \textsc{Quantified Boolean Formula} with polynomially many quantifier alternations is PSPACE-hard~\cite{AB09}, we obtain the following.

\begin{theorem}
    \generalizedproblemabbrv is PSPACE-complete even if the probability threshold $t$ equals one.
\end{theorem}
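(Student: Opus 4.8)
The plan is to prove both containment in PSPACE and PSPACE-hardness, keeping the probability threshold fixed at $t=1$ throughout. For membership, I would evaluate the natural ``coalition versus randomness'' game tree of the tournament by a depth-first recursion over the rounds: at the current round, branch over the at most $2^{|C|}$ strategy profiles for the coalition players still alive; for a fixed profile, iterate over the possible outcomes of that round's games, recursing on the resulting configuration and accumulating the probability-weighted sum of the returned values; then return the maximum over the profiles. Since an imbalanced binary tree on $n$ leaves has at most $n$ rounds, the recursion has depth $O(n)$, and each stack frame stores only a configuration of polynomial size, the profile and outcome currently being iterated, and a running rational of polynomial bit-length; hence the whole computation runs in polynomial space, and comparing the returned optimum with $t$ decides the instance. (This is exactly the polynomial-space guarantee of the dynamic-programming algorithm developed later in the paper, so membership also follows from the algorithmic part.)

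For hardness I would reduce from \textsc{Quantified Boolean Formula} with a polynomial (unbounded) number of quantifier alternations, which is PSPACE-complete~\cite{AB09}, by adapting the reduction behind \cref{thm:PH}. The existential, universal, and clause gadgets together with \cref{claim:existential,claim:universal,claim:clause} are reused verbatim, as are all the winning-probability assignments between variable players, clause players, dummies, and $e^*$. The only component that must change is the ``enlarging'' step and the assembly of the gadgets: in \cref{thm:PH} these force a complete binary tree of height $\Theta(k)$ and hence $2^{\Theta(k)}$ players, which is precisely the obstruction to reducing from instances with $k$ polynomial in the input size. In the generalized setting the tournament tree may be imbalanced, so instead of padding with $2^{r_1}$ dummies per player to ``delay'' a gadget to round $r_1$, I would attach to each of its constantly many entry leaves a chain of dummy opponents of the appropriate length, and attach a further dummy chain of length $O(k)$ so that each gadget winner survives until the clause gadgets begin; this costs only $O(k)$ dummies per gadget. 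The $m+kn$ gadget winners are then combined by an arbitrary binary bracket (no power-of-two padding needed), and $T_\text{variables}$, $T_\text{clauses}$, the player $e^*$, and a handful of length-aligning dummy chains are put together as in \cref{thm:PH}. The produced instance then has $O((kn+m)k)$ players, i.e.\ polynomial size, and again asks whether $e^*$ can win with probability one.

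Correctness would then follow the same two-direction argument as in \cref{thm:PH}. The analogue of \cref{claim:enlarge}---that the dummy chains do not change which gadget player wins, and that each gadget winner is determined in a round strictly preceding the first round of the next batch's gadgets and of the clause gadgets---goes through by the same argument, using only that dummies lose to non-dummies; granted that, the ``$\Rightarrow$'' and ``$\Leftarrow$'' directions are essentially copied, translating the round labels $3i-3$, $3k$, etc., into the round labels of the imbalanced construction. I expect the only real work to be this last translation: pinning down, for the imbalanced tree, the exact round in which each gadget's games occur and its winner is fixed, and verifying that for every universal batch the relevant outcomes are settled strictly before the coalition players of the following existential batch must act (so that adaptiveness is genuinely available) and before the clause gadgets start. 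No new combinatorial idea beyond this timing bookkeeping should be needed, which is why the modification can fairly be called straightforward.
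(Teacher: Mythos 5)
Your proposal is correct and follows essentially the same route as the paper: membership via a depth-first recursive evaluation of the dynamic program in polynomial space, and hardness from \textsc{Quantified Boolean Formula} with polynomially many alternations by replacing the exponential balanced padding of \cref{thm:PH} with polynomial-length dummy chains in an imbalanced tree. The only cosmetic difference is that the paper obtains the polynomial-size instance by pruning the (conceptual) exponential balanced tree down to the paths reaching non-dummy leaves and re-binarizing with fresh dummy leaves, whereas you build the same chained structure directly.
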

\begin{proof}
We defer proving the containment of \generalizedproblemabbrv in PSPACE to \cref{sec:algo} (\cref{cor:pspace}), where we present our algorithmic results.

For the hardness, we first apply the reduction described in the proof of \cref{thm:PH} to an instance of \textsc{Quantified Boolean Formula} with polynomially many quantifier alternations, which is known to be PSPACE-hard~\cite{AB09}. The size of the obtained instance of \generalizedproblemabbrv is exponential in the size of the \textsc{Quantified Boolean Formula} instance. Hence, we have to reduce its size by removing parts of the tournament tree. We do this in the following way.

We remove every vertex that is \emph{not} a part of a path from the root to a leaf labeled with the name of a non-dummy player. Furthermore, we remove all dummy players from the probability matrix. The resulting tree is not necessarily binary, that is, it may contain vertices that only have one child. To each of those vertices, we add a second child vertex (as a leaf) and label it with the name of a new dummy player. These new dummy players lose against every non-dummy player with a probability of one. All remaining probabilities are set to $\frac{1}{2}$.

The obtained instance now has a polynomial size, since every path from the root to a leaf vertex has a polynomial length. Furthermore, we can compute this instance without explicitly constructing the exponentially large instance obtained from the reduction described in the proof of \cref{thm:PH}. Thus, the instance can be constructed in polynomial time. 

The correctness proof is analogous to the one in the proof of \cref{thm:PH}.
\end{proof}

\subsection{Parameterized Hardness}

In this section, we show that \generalizedproblemabbrv is NP-hard and W[1]-hard when parameterized by the coalition size even on instances with a random game cover of size two. To this end, we present a parameterized reduction from \textsc{Multicolored Clique} parameterized by the number of colors~\cite{fellows2009multipleinterval}.

\begin{theorem}\label{thm:w1}
    \generalizedproblemabbrv is NP-hard and W[1]-hard when parameterized by the coalition size even if the size of a minimum random game cover is two and the probability threshold $t$ equals one.
\end{theorem}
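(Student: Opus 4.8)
The plan is to give a polynomial-time parameterized reduction from \textsc{Multicolored Clique}, parameterized by the number $k$ of colors: the input is a graph $G$ whose vertex set is partitioned into color classes $V_1,\dots,V_k$, and we must decide whether $G$ has a clique containing exactly one vertex of each class. Since this problem is both NP-hard and W[1]-hard with respect to $k$, it suffices to construct, in time polynomial in $|G|$, an equivalent instance of \generalizedproblemabbrv with threshold $t=1$, with a coalition of size $k+\binom{k}{2}$ (a function of $k$ only), and with a minimum random game cover of size two.

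The workhorse is a \emph{selection gadget}: the imbalanced-tree analogue of the existential variable gadget of \cref{thm:PH}. Given candidates $a_1,\dots,a_m$ and a single coalition player $q$, lay out $q,a_1,\dots,a_m$ along a ``caterpillar'' binary tree whose games are $q$ versus $a_1$ in round~$1$, the winner versus $a_2$ in round~$2$, the winner versus $a_3$ in round~$3$, and so on. Setting $p(q,a_j)=1$ for $j<m$, $p(a_m,q)=1$, and $p(a_i,a_j)=1$ for all $i<j$, exactly the argument of \cref{claim:existential} (and \cref{claim:clause}) shows that precisely one of $a_1,\dots,a_m$ wins the gadget, that $q$ can force any prescribed one of them (throw against $a_i$, then $a_i$ coasts past the rest; or, for $a_m$, simply let $a_m$ win), and that $q$ itself can never win the gadget. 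The crucial point is that only one coalition player is spent and all games in the gadget are deterministic, even though $m$ may be polynomially large --- this is precisely what fails in the balanced setting of \basicproblemabbrv. I would install one selection gadget per color $\ell$, with candidates $V_\ell$ and coalition player $q_\ell$ (its winner is the ``selected vertex'' $v_\ell$), and one selection gadget per color pair $\{i,j\}$, with candidates the edges of $G$ between $V_i$ and $V_j$ and coalition player $q_{ij}$ (its winner is the ``selected edge'' $e_{ij}$), for a total coalition of size $k+\binom{k}{2}$.

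To funnel the selected vertices (respectively edges) into a single semifinalist while confining all non-determinism to two players, I would add two \emph{randomize gadgets}, again caterpillar-shaped. The vertex randomize gadget has a dedicated player $R_v$ at its deepest position and the winners $v_1,\dots,v_k$ of the vertex selection gadgets as successive spine entrants (padding the selection gadgets by dummy chains so that their rounds align); we set every game of $R_v$ to probability $\tfrac12$ and impose that any vertex of color $i$ deterministically beats any vertex of color $j$ when $i<j$. Then each selected vertex emerges as the winner of this gadget with positive (though possibly exponentially small) probability, and $R_v$'s games are the only non-deterministic games on the vertex side; the edge randomize gadget is symmetric, with a player $R_e$. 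The winners of the two randomize gadgets meet in the semifinal, whose winner meets $e^*$ in the final (with $e^*$ reached via a dummy chain). We set $p(e^*,\cdot)=1$ against every vertex player, every dummy player, and both $R_v,R_e$, and $p(e^*,\cdot)=0$ against every edge player; in the semifinal, an edge player of color pair $\{i,j\}$ beats a vertex player of color $\ell$ with probability one exactly when $\ell\in\{i,j\}$ and the vertex is \emph{not} an endpoint of that edge, and otherwise the vertex player wins with probability one; finally $R_v$ beats every edge player and $R_e$, and $R_e$ loses to every vertex player, so the only way an edge player reaches the final is to win the semifinal against a ``bad'' vertex player. Every remaining probability is $\tfrac12$ but, by construction, its game involves $R_v$ or $R_e$, so $\{R_v,R_e\}$ is a random game cover (cf.\ \cref{def:rgc}), and it is minimum since no single player occurs in all non-deterministic games.

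For correctness, observe that because $e^*$ beats everything except edge players with probability one, $e^*$ wins the whole tournament with probability one if and only if no edge player reaches the final under any joint outcome of the $\tfrac12$-games. Since the vertex side and the edge side are probabilistically independent and each selected vertex/edge emerges with positive probability, a short case analysis shows this holds if and only if, for every color pair $\{i,j\}$, the selected edge $e_{ij}$ has the selected vertices $v_i$ and $v_j$ as its two endpoints --- equivalently, if and only if $\{v_1,\dots,v_k\}$ is a clique for which the $e_{ij}$ are the clique edges. As the coalition can realize any combination of per-gadget selections, the constructed instance is a yes-instance exactly when $G$ has a multicolored clique; combined with the polynomial running time and the bound $|C|=k+\binom{k}{2}=O(k^2)$, this yields both the NP-hardness and the W[1]-hardness with respect to $|C|$. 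I expect the main obstacle to be the tension between the two design goals: we need genuine randomness so that every selectable vertex and edge can reach the semifinal, yet a random game cover of size two forbids spreading that randomness around --- the resolution is to push all randomness through the two caterpillar players $R_v,R_e$. A secondary, more mechanical difficulty is engineering the imbalanced tournament tree (dummy chains and depth bookkeeping) so that the selection and randomize gadgets occur in the intended rounds; this is exactly where the freedom of \generalizedproblemabbrv to use unbalanced trees is essential, and why the same approach does not apply to \basicproblemabbrv.
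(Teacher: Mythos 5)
Your proposal follows essentially the same route as the paper's proof: a reduction from \textsc{Multicolored Clique} with caterpillar-shaped selection gadgets (one coalition player each, $k+\binom{k}{2}$ in total), two caterpillar randomize gadgets whose single spine players $R_v,R_e$ absorb all randomness, and a semifinal/final arrangement in which $e^*$ wins with probability one if and only if the selected vertices and edges form a multicolored clique. One detail to tighten: a random game cover must hit \emph{every} pair $i,j$ with $0\neq p(i,j)\neq 1$ in the probability matrix, not just pairs that can actually meet, so the remaining unspecified probabilities must be set deterministically (the paper totally orders the players and sets $p(a,b)=1$ for $a<b$) rather than to $\tfrac12$, as otherwise $\{R_v,R_e\}$ would not be a random game cover.
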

\begin{proof}
     We present a parameterized polynomial-time reduction from \textsc{Multicolored Clique} parameterized by the number of colors, which is known to be W[1]-hard~\cite{fellows2009multipleinterval}. Here, given a $k$-partite graph $G=(V_1\uplus V_2 \uplus\ldots\uplus V_k, E)$, we are asked whether $G$ contains a clique of size $k$. The parameter is~$k$. If $v\in V_i$, then we say that $v$ has \emph{color} $i$. W.l.o.g.\ we assume that $|V_1|=|V_2|=\ldots=|V_k|=n$. Let $E_{i,j}$ denote the set of all edges between vertices from $V_i$ and $V_j$. We assume w.l.o.g.\ that $|E_{i,j}|=m$ for all $i\neq j$. We assume that all vertices and edges are ordered in some fixed but arbitrary way.

     Given an instance of \textsc{Multicolored Clique}, we construct an instance of \generalizedproblemabbrv as follows. The reduction uses two gadgets: a vertex selection gadget and an edge-selection gadget. Both work in a similar way, hence we describe a \emph{generic selection gadget $T_S$} for some set $S$, which is a subtournament, that is, a rooted binary tree, with $|S|+1$ players; one player for each element in $S$ and one additional coalition player. The aim is that the coalition player can ``decide'' which of the players corresponding to the elements in $S$ wins the subtournament.

\subparagraph{Generic Selection Gadget.} Assume we are given a set $S$ and the elements in $S$ are ordered in some fixed but arbitrary way, that is, $S=\{s_1, s_2,\ldots,s_\ell\}$. We create one player for each element in $S$ (we identify those players with their corresponding elements in $S$), and we create one coalition player $c$. We create a tournament for the players $S\cup\{c\}$ as follows. See \cref{fig6} for an illustration.

We create a binary tree $T_S$ with $2\ell+1$ vertices, two vertices $v_i,u_i$ for each element $s_i\in S$, and one vertex $v_c$. Vertex $u_\ell$ is the root of the tree. Each vertex $u_i$ with $i>1$ has vertices $v_i$ and $u_{i-1}$ as children. Vertex $u_1$ has vertices $v_1$ and $v_c$ as children. We can see that all vertices $v_i$ with $i\in[\ell]$ and $v_c$ are leaves. Each vertex $v_i$ is labeled with $s_i\in S$ and vertex $v_c$ is labeled with $c$. 

\begin{figure}[t]
 \centering
    \includegraphics[scale=0.8]{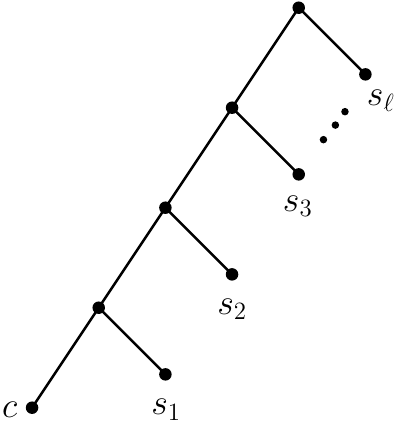}
    \caption{Illustration of a Generic Selection Gadget $T_S$ for some set $S=\{s_1,s_2,\ldots,s_\ell\}$.} 
    \label{fig6}
\end{figure}

Finally, we set the winning probabilities as follows.
\begin{itemize}
    \item For all $1\le i < \ell$ we set $p(v_c,v_i)=1$.
    \item We set $p(v_c,v_\ell)=0$.
    \item For all $1\le i<i'\le \ell$ we set $p(v_i,v_{i'})=1$.
\end{itemize}

This finishes the construction of the gadget. Note that $T_S$ does not contain any random games.
\begin{claim}\label{claim:select}
    One of the players $s_i\in S$ wins the tournament $T_S$, and coalition player $c$ can decide which one.
\end{claim}
\begin{claimproof}
Note that the tournament has $\ell$ rounds, and in each round $1\le i\le \ell$, we have that~$v_i$ is one of the players involved in the game. Furthermore, note that the coalition player $c$ can win against every other player except player $v_\ell$. We show the following: the player against whom the coalition player loses wins the overall tournament. Since $v_\ell$ is the last potential opponent of $c$, player $c$ can choose against which other player to lose. Assume $c$ chooses to lose against player $v_i$. Then $c$ already played and beat players $\{s_{i'}\mid i'<i\}$ and the remaining players in the tournament are $\{s_{i''}\mid i''>i\}$. By construction, player $v_i$ beats all players in $\{s_{i''}\mid i''>i\}$ and hence wins the tournament $T_S$.
\end{claimproof}

Now we create a \emph{vertex selection gadget} $T_{V_i}$ for each color $i\in[k]$ and an \emph{edge selection gadget}~$T_{E_{i,j}}$ for each color combination $1\le i<j\le k$.
We set the following winning probabilities for players from different gadgets.
Let player $s$ correspond to vertex $v$ of color $i$ (this implies that $s$ is a player from the subtournament $T_{V_i}$. Let player $s'$ correspond to some edge $e$. We set $p(s,s')=1$ if $e$ is an edge of a color combination that involves color $i$ and $v$ is \emph{not} one of the endpoints of $e$; otherwise, we set $p(s,s')=0$. 

Next, we describe how to combine the gadgets into the overall tournament. To this end, we introduce \emph{randomize gadgets}.

\subparagraph{Randomize Gadget.} Let $\ell$ be some number. We denote with $T^R_\ell$ a \emph{randomize gadget} with $\ell$ ``slots'', that is, a tournament where $\ell$ of the leaves are labeled with placeholder names. Intuitively, if we label the slots with player names, then each of those players has a non-zero probability of winning the tournament $T^R_\ell$, and no other player in $T^R_\ell$ can win.

  \begin{figure}[t]
 \centering
    \includegraphics[scale=0.7]{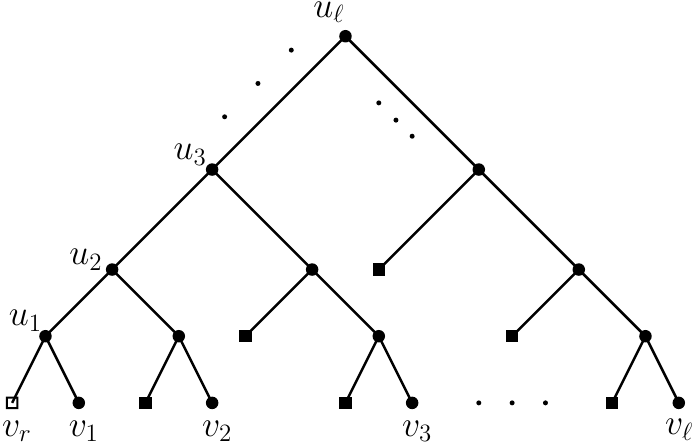}
    \caption{Illustration of the Randomize Gadget. Solid square-shaped leaf vertices are labeled with the names of dummy players. Vertex $v_r$ is labeled with player $r$, which is also the only player in the random game cover of the gadget. Vertices $\{v_1,\ldots,v_\ell\}$ are labeled with placeholder names.} 
    \label{fig8}
\end{figure}

Formally, we create the following binary tree. We create a root vertex $u_\ell$ and vertex $v_r$. We connect $u_\ell$ to $v_r$ with a path of length $\ell$. Let $\{u_1,u_2,\ldots,u_{\ell-1}\}$ denote the internal vertices of that path. Now we create vertices $\{v_1,v_2,\ldots,v_\ell\}$. For all $1\le i\le \ell$, we connect $u_i$ to $v_i$ with a path of length $i$. Now we add an additional leaf vertex to all internal vertices of all paths to vertices $\{v_1,v_2,\ldots,v_\ell\}$ and label each of these leaves with a new dummy player name. We create a player~$r$ and label $v_r$ with $r$. For all $1\le i\le \ell$, we label $v_i$ with a placeholder player name $f_i$. See \cref{fig8} for an illustration.

Finally, we set the winning probabilities as follows.
\begin{itemize}
    \item For all $1\le i < \ell$ we set $p(r,f_i)=\frac{1}{2}$.
    \item We set $p(r,f_\ell)=0$.
    \item For all $1\le i<i'\le \ell$ we set $p(f_i,f_{i'})=1$.
    \item We set the winning probability of a dummy player against a non-dummy player to zero.
    \item We order the dummy players in some fixed but arbitrary way, and for two dummy players $d,d'$ with $d<d'$, we set $p(d,d')=1$.
\end{itemize}

This finishes the construction of the gadget. Note that the size of the gadget is on $O(\ell^2)$ and that~$\{r\}$ is a random game cover for $T^R_\ell$.

\begin{claim}\label{claim:randomselect}
    One of the players $\{f_1,\ldots,f_\ell\}$ wins the tournament $T^R_\ell$, and each player in $\{f_1,\ldots,f_\ell\}$ has a non-zero probability to win $T^R_\ell$.
\end{claim}
\begin{claimproof}
Note that all players $f_i$ with $2\le i\le\ell$ play (and win with probability one) against dummy players until they reach a child vertex of $u_i$. By construction, player $f_i$ wins the tournament if they are paired up against player $r$ in the game corresponding to $u_i$ and win this game. For $i<\ell$, this happens with probability $2^{-i}$, since $r$ has to win $i-1$ games in a row where its winning probability of each game is $\frac{1}{2}$ and finally $f_i$ beats $r$ with probability $\frac{1}{2}$. Finally, player~$f_\ell$ wins with probability $2^{-\ell+1}$ since $r$ has to win $\ell-1$ games in a row where its winning probability of each game is $\frac{1}{2}$ and $f_\ell$ beats $r$ with probability $1$.
\end{claimproof}

We create two randomize gadgets $T_1,T_2$ with $\binom{k}{2}$ slots each. We identify the first $k$ leaves of~$T_1$ that are labeled with placeholder player names with the roots of the vertex selection gadgets~$T_{V_i}$ with $i\in[k]$ and label the remaining leaves with placeholder names with dummy players.
We identify the leaves of $T_2$ that are labeled with placeholder player names with the roots of the edge selection gadgets $T_{E_{i,j}}$ with $1\le i<j\le k$. All players participating in the tournaments rooted at leaves with placeholder names adopt the winning probabilities of the corresponding placeholder players. 
Finally, we introduce player $e^*$. Now we introduce a binary tree $T$ with five vertices $\{v_1,v_2,v_3,v_4,v_5\}$ rooted at $v_1$ and edges such that~$v_1$ has $v_2$ and $v_3$ as children and $v_2$ has $v_4$ and~$v_5$ as children. We identify $v_4$ with the root of $T_1$ and~$v_5$ with the root of $T_2$. We label $v_3$ with~$e^*$. See \cref{fig7_1} for an illustration. Finally, we set the following winning probabilities. Dummy players lose against all non-dummy players with probability one. Player $e^*$ loses against all players~$v$ corresponding to vertices with probability one and wins against all players $e$ corresponding to edges with probability one. 
We order all players in some arbitrary but fixed way, and for all players $a,b$ with $a<b$ for which we have not yet specified a winning probability, we set $p(a,b)=1$.
Finally, we set the probability threshold $t=1$.

\begin{figure}[t]
 \centering
    \includegraphics[scale=1.00]{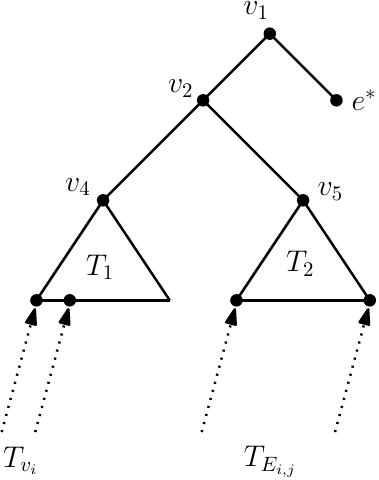}
    \caption{Schematic illustration of the constructed instance of \generalizedproblemabbrv.} 
    \label{fig7_1}
\end{figure}

This finishes the construction, which can clearly be computed in polynomial time. Note that we have $k+\binom{k}{2}$ coalition players and a random game cover of size two (the union of the random game covers of $T_1$ and $T_2$). 

\smallskip

\noindent\emph{Correctness.} In the remainder, we show that the constructed instance of \generalizedproblemabbrv is a yes-instance if and only if the given \textsc{Multicolored Clique} instance is a yes-instance.

\smallskip

$(\Rightarrow)$: Assume the \textsc{Multicolored Clique} instance is a yes-instance and that $X$ is a clique in~$G$. We create a strategy for the coalition players that leads $e^*$ to win the constructed tournament with probability one as follows.

Let $c$ be a coalition player in the subtournament corresponding to the vertex selection gadget for color $i$. Let $\{v\}=X\cap V_i$. By \cref{claim:select}, there is a strategy for $c$ that ensures that the player corresponding to $v$ wins the subtournament $T_{V_i}$. We choose this strategy for $c$.
Similarly, let $c'$ be a coalition player in the subtournament corresponding to the edge selection gadget for color combination $i,j$. Let $e=\{v_i,v_j\}$ be the edge in $E_{i,j}$ with $\{v_i\}=X\cap V_i$ and $\{v_j\}=X\cap V_j$. Since~$X$ is a clique, this edge must exist. By \cref{claim:select}, there is a strategy for $c'$ that ensures that the player corresponding to $e$ wins the subtournament $T_{E_{i,j}}$. We choose this strategy for $c'$.

We claim that this strategy leads $e^*$ to win the tournament with probability one. Assume for contradiction that it does not. Note that by construction, $e^*$ only loses against players corresponding to vertices, and $e^*$ only plays one game---the final game of the tournament. Assume that there is a player $v$ corresponding to a vertex that plays against $e^*$ in the final game with non-zero probability. Let $v$ have color $i$. This means that $v$ won the tournament $T_{V_i}$ and hence $v\in X$. By \cref{claim:randomselect}, we have that in the semi-final game, $v$ plays against a player $e$ corresponding to an edge in $G$. Vertex player $v$ beats edge player $e$ only if the edge connects a vertex $v'$ of color $i$ to some other vertex (say of color $j$) and $v'\neq v$. We also have that $e$ won the subtournament $T_{E_{i,j}}$ and hence $e$ connects two vertices in $X$. However, we have that $v\in X$, and $X$ contains only one vertex of each color. This is a contradiction to $e$ connecting two vertices in $X$.

\smallskip

$(\Leftarrow)$: Assume that there exists a strategy for the coalition players that leads $e^*$ to win the constructed tournament with probability one.
We prove that then the \textsc{Multicolored} instance is a yes-instance.

Let $i$ be a color and let $v$ be the player winning the subtournament $T_{V_i}$. By \cref{claim:select}, we know that $v$ is a player corresponding to a vertex. We say that $v$ is the selected vertex of color $i$. Let $X$ be the set containing the selected vertex of each color. We claim that $X$ is a clique in $G$.

Assume for contradiction that it is not. Then there are two vertices $v,w\in X$ such that $\{v,w\}$ is not an edge in $G$. Let $i$ be the color of $v$ and let $j$ be the color of $w$. Now let $e$ be the player winning the tournament $T_{E_{i,j}}$. By \cref{claim:select}, we know that $e$ is a player corresponding to an edge. Furthermore, we know that $e\neq\{v,w\}$. Assume that $v$ is not an endpoint of $e$ (the case where~$w$ is not an endpoint of $e$ is analogous). By \cref{claim:randomselect}, there is a non-zero probability that~$v$ wins the subtournament~$T_1$, and that $e$ wins the subtournament $T_2$. Furthermore, the strategy of the coalition players cannot depend on the winners of $T_1$ and $T_2$, since by construction, all coalition players are knocked out before the subtournaments  $T_1$ and $T_2$ start.
It follows that $v$ plays against~$e$ in the second-last round of the tournament with non-zero probability. By construction, we have that~$v$ wins with probability one. However, then $v$ plays against $e^*$ in the final game and wins with probability one. Hence, there is a non-zero probability that $e^*$ loses the tournament, a contradiction.
\end{proof}

\subsection{Hardness of Finding Best Responses and Non-Adaptive Strategies}

In this section, we show that computing the best responses is NP-hard and that the non-adaptive problem variant \nonadaptproblemabbrv is NP-hard. Formally, we show the following.

\begin{theorem}\label{thm:resonseNPh}
\basicbestresponseabbrv and \nonadaptproblemabbrv are NP-hard.
\end{theorem}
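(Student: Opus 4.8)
The plan is to derive both hardness results, with little extra work, from the reduction used to prove \cref{thm:PH}. Start from a $3$-CNF formula $\phi$ and feed it to that reduction, regarding $\phi$ as the quantified formula $\exists \vec{x}\,\phi(\vec{x})$ and padding it, as prescribed there, with a block of dummy universal variables $\vec{y}$ not occurring in $\phi$ so that the parity convention of \cref{thm:PH} is met. Let $I_\phi=(N,C,e^*,P_N,s_1)$ be the resulting \basicproblemabbrv-instance: it has polynomial size (the number of quantifier alternations is one), its only coalition players are the players $q_{i,j}$ of the existential variable gadgets and the players $q_c$ of the clause gadgets, and by the correctness argument of \cref{thm:PH}, $e^*$ can win the whole tournament with probability $1$ if and only if $\phi$ is satisfiable.

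For \nonadaptproblemabbrv, the key observation is that in $I_\phi$ every coalition player is knocked out within a constant number of rounds, and up to that point the only non-deterministic games---those inside the (inert) universal variable gadgets---involve no coalition player. Consequently, the winning strategy produced in the ``$\phi$ satisfiable $\Rightarrow$ yes'' direction of \cref{thm:PH} reacts only to the outcomes of the universal gadgets, which here are irrelevant, and hence may be taken to be \emph{non-adaptive} (choose an assignment in the variable gadgets and one satisfied literal in each clause gadget). The converse direction of \cref{thm:PH} transfers verbatim, because every non-adaptive strategy is simulated by an adaptive one with the same outcome distribution, so if no adaptive strategy makes $e^*$ win with probability $1$ then no non-adaptive one does either. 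Thus $(N,C,e^*,P_N,s_1,1)$ is a yes-instance of \nonadaptproblemabbrv iff $\phi$ is satisfiable, which proves NP-hardness and, together with the NP membership shown by Mattei et al.~\citep{mattei2015complexity}, NP-completeness, resolving their open case.

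For \basicbestresponseabbrv---which is a search problem---I would show that a polynomial-time algorithm for it would decide $3$-SAT. Run such an algorithm on $I_\phi$ to obtain a best response $c_{\text{best}}\colon C\to\{0,1\}$, and define the assignment $\alpha$ by letting $\alpha(x_{i,j})$ be true exactly when $c_{\text{best}}(q_{i,j})=0$; this is well defined because the first-round action of $q_{i,j}$ already determines which variable player wins its gadget. I claim $\phi$ is satisfiable iff $\alpha\models\phi$, a condition testable in polynomial time, so the algorithm solves $3$-SAT. One direction is trivial. For the other, assume $\phi$ is satisfiable, so $t_{\text{opt}}=1$; then by the definition of a best response, $t^{(2)}_{\text{opt}}(s_2)=1$ for every second-round configuration $s_2$ reachable under $c_{\text{best}}$. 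If $\alpha$ falsified some clause $c$, then---no matter how the coalition plays from round two on---the winner of the clause gadget $T_c$ is some clause player $c_j$ representing a literal falsified by $\alpha$; since the ``fighting'' games among variable players and among clause players have probability $\tfrac12$, with positive probability $c_j$ reaches the semi-final against exactly the variable player that falsifies its literal, beats it with probability one, and then beats $e^*$ with probability one, so $e^*$ loses with positive probability from that configuration, contradicting $t^{(2)}_{\text{opt}}(s_2)=1$. Hence $\alpha\models\phi$.

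The only genuinely new ingredient, and the likely main obstacle, is this last argument: because \basicbestresponseabbrv returns a first-round strategy profile rather than a yes/no answer, I must exploit that this profile already pins down a full truth assignment (via the variable-gadget coalition players), that the assignment is polynomial-time verifiable, and---crucially---that when $\phi$ is satisfiable \emph{every} best response is forced to encode a genuinely satisfying assignment, which is exactly where the ``a falsified clause's player reaches the semi-final with positive probability'' argument from \cref{thm:PH} is re-used, now applied to the second-round sub-instances. For \nonadaptproblemabbrv the only point to verify is that adaptivity is worthless in $I_\phi$, which is immediate from the early elimination of all coalition players.
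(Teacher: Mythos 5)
Your proposal is correct, but it reaches the two hardness results by a genuinely different route than the paper. The paper also reduces from 3SAT and reuses the existential variable gadgets of \cref{thm:PH}, but it \emph{replaces} the clause gadget by a new one (three coalition players $q_1,q_2,q_3$, one ``ensurance'' player, one dummy) engineered so that \emph{every} relevant coalition decision is made in the very first round; a best response is then literally a complete (non-adaptive) strategy, its induced winning probability for $e^*$ is computable in polynomial time, and both \basicbestresponseabbrv and \nonadaptproblemabbrv hardness follow by an argument ``analogous to \cref{thm:PH}''. You instead keep the original single-coalition-player clause gadget and compensate with two targeted observations: for \nonadaptproblemabbrv, that in the $\exists$-only instance every coalition player meets a deterministic sequence of opponents before being eliminated, so adaptivity buys nothing (this is sound, since a non-adaptive $\xi(q_c,\cdot)$ can still distinguish the opponents $c_1$ and $c_2$ across rounds); and for \basicbestresponseabbrv, that the first-round profile alone pins down the truth assignment via the variable gadgets, and that whenever $\phi$ is satisfiable (so $t_{\text{opt}}=1$ and hence $t^{(2)}_{\text{opt}}(s_2)=1$ for every reachable $s_2$) any best response must encode a satisfying assignment, because a falsified clause forces \emph{some} falsified-literal player to win $T_c$ under every continuation (\cref{claim:clause}) and that player then reaches and wins the final with positive probability. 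This last step is the delicate part of your approach and you execute it correctly; it is where the paper's redesigned gadget lets it avoid reasoning about $t^{(2)}_{\text{opt}}$ altogether. In short: the paper buys a uniform, almost mechanical correctness proof at the cost of a new gadget, while you buy gadget reuse at the cost of a sharper argument about what best responses are forced to look like. Both are valid; yours additionally makes explicit the NP-completeness of \nonadaptproblemabbrv via the containment of Mattei et al.
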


We remark that the hardness of \basicbestresponseabbrv does not follow immediately from the hardness of \basicproblemabbrv (\cref{thm:PH}): An algorithm computing a best response in polynomial time cannot straightforwardly be used to solve \basicproblemabbrv in polynomial time, since already for the second round there can be exponentially many possibilities on which players advance. However, we can modify the reduction presented in the proof of \cref{thm:PH} in a way that all relevant decisions of the coalition are made in the first round. Hence, the best response is equivalent to a non-adaptive strategy in this case.

\begin{proof}[Proof of \cref{thm:resonseNPh}]
    We present a polynomial-time reduction from the NP-hard problem 3SAT~\cite{Kar72}. The intuition for the reduction is the same as in the reduction presented for \cref{thm:PH}. Given an instance of 3SAT, that is, a Boolean formula $\phi$ in conjunctive normal form and where every clause has size exactly three, we create an instance of \basicbestresponseabbrv and \nonadaptproblemabbrv as follows.

    For each variable, we create an existential variable gadget (see proof of \cref{thm:PH}). Note that in the existential variable gadgets, the coalition player decided in the first round, which of the variable players wins the subtournament of the gadget (see \cref{claim:existential}).
    Now, we introduce a new clause gadget such that the coalition players in the clause gadget make all relevant decisions in the first round.

\subparagraph{New Clause Gadget.} Let $c$ be a clause. Then, we create 8 players:
\begin{itemize}
\item Three players $c_1, c_2, c_3$, representing the three literals of the clause.
\item Three coalition players $q_1, q_2, q_3$. These players, informally, will be able to select a literal of the clause.
\item One \emph{ensurance player} $e$ that, informally, ensures that at least one literal is selected.
\item One dummy player $d$, which we need for technical reasons.
\end{itemize}

We call $c_1$, $c_2$, and $c_3$ ``clause players''.  We set
\begin{itemize}
    \item $p(q_1,c_1)=p(q_2,c_2)=p(q_3,c_3)=1$,
    \item $p(q_1,c_2)=p(q_1,c_3)=0$,
    \item $p(q_2,c_1)=p(q_2,c_3)=0$,
    \item $p(q_3,c_1)=p(q_3,c_2)=0$,
    \item $p(e,q_1)=p(e,q_2)=p(e,q_3)=1$,
    \item $p(e,c_1)=p(e,c_2)=p(e,c_3)=0$, and
    \item $p(e,d)=1$.
\end{itemize}

The remaining win probabilities are set to $\frac{1}{2}$. 

Now we create a tournament $T_c$ with the eight players and seed $(q_1,c_1,q_2,c_2,q_3,c_3,e,d)$. 
\begin{claim}\label{claim:newclause}
    One of the players $c_1,c_2,c_3,e$ wins the tournament $T_c$, and coalition players $q_1,q_2,q_3$ can decide in the first round which one.
\end{claim}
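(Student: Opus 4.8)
The plan is to prove the claim by a direct analysis of the three rounds of $T_c$, whose seeding $(q_1,c_1,q_2,c_2,q_3,c_3,e,d)$ makes the first-round games $q_1$ vs $c_1$, $q_2$ vs $c_2$, $q_3$ vs $c_3$, and $e$ vs $d$. First I would record the deterministic facts that drive the gadget: each $q_j$ beats $c_j$ unless $q_j$ throws the game; each $q_j$ loses with probability one to $c_i$ whenever $i\neq j$; and the ensurance player $e$ beats $d$ and beats every $q_i$ with probability one but loses with probability one to every $c_j$. The only games whose outcome is not already forced are games between two players of the same type (two $q$'s, or two $c$'s) and games involving $d$; but in every strategy $d$ loses to $e$ in round one and hence never reaches such a game.

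For the direction that the coalition can force any designated winner in $\{c_1,c_2,c_3,e\}$, I would exhibit one first-round action profile per target. To make $e$ win, let $q_1,q_2,q_3$ all play to win; then $q_1,q_2,q_3,e$ advance, $e$ beats $q_3$ in the semifinal, and $e$ beats whichever of $q_1,q_2$ reaches the final (their mutual game may be random, but this is irrelevant), so $e$ wins with probability one. To make some $c_j$ win, let $q_j$ throw its first-round game and let the other two $q$'s play to win; then $c_j$ is the unique clause player to reach round two, and from that point on $c_j$ meets only players in $\{q_i : i\neq j\}\cup\{e\}$, each of whom it beats with probability one. Hence, even though the half of the bracket not containing $c_j$ may include a random same-type game, $c_j$ wins $T_c$ with probability one; I would check the three sub-brackets $j=1,2,3$ just far enough to confirm that $c_j$ faces only such opponents.

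For the direction that no other player can ever win, I would rule out each remaining player under an arbitrary strategy: $d$ is eliminated in round one; any $q_i$ that throws is eliminated in round one; a $q_3$ that survives round one loses to $e$ in round two; and a surviving $q_1$ or $q_2$ that happens to win its (necessarily random) round-two game then meets, in the final, the unique survivor of the $\{q_3,c_3,e,d\}$ sub-bracket, which is $e$ (if $q_3$ played to win) or $c_3$ (if $q_3$ threw), and both $e$ and $c_3$ beat $q_1$ and $q_2$ with probability one. Thus the winner of $T_c$ always lies in $\{c_1,c_2,c_3,e\}$. This also yields the ``decide in the first round'' part: the only coalition players are $q_1,q_2,q_3$, none of which can win $T_c$, and the survivor of each side of the final is determined by the round-one actions (the side of $c_j$ by whether $q_j$ throws, the other side by whether $q_3$ throws); so no coalition decision made after round one — in particular, $q_3$'s choice against $e$ in round two, or a surviving $q_1$/$q_2$'s choices — can affect who wins $T_c$.

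The step requiring the most care — though not a genuine obstacle — is verifying that the potentially random same-type games cannot spoil a forced outcome. This is handled uniformly by routing the designated winner ($c_j$, or $e$) so that it only ever meets opponents it defeats with probability one, so that the random games occur only in parts of the bracket that collapse onto a loser before they can influence who wins $T_c$.
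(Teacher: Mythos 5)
Your proof is correct and follows essentially the same route as the paper: a direct case analysis of the bracket showing that all $q_i$ advancing forces $e$ to win, that throwing exactly $q_j$'s first game forces $c_j$ to win, and that $d$ and the $q_i$ can never win, with the random same-type games occurring only in doomed parts of the bracket. (One phrase is slightly loose — when no $q_j$ throws, the top-half survivor, one of $q_1,q_2$, is \emph{not} determined by round-one actions — but this does not affect the argument since either one loses the final with probability one.)
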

\begin{claimproof}
Note that player $e$ wins against player $d$ in round one with probability one.
In every other pairing, the coalition players can decide whether they want to advance or throw the game.
If all coalition players advance, then $e$ wins the tournament, since $e$ beats each coalition player with probability one.
If coalition player $q_i$ with $i\in[3]$ throws the game in the first round and both other coalition players advance, then $c_i$ wins the tournament, since $c_i$ wins against the other coalition players and against $e$ with probability one.
Lastly, if more than one player from $\{c_1,c_2,c_3\}$ advances to the second round, then each of them has a non-zero probability of winning the tournament, and since every clause player beats $e$ with probability 1, $e$ cannot win the tournament.
\end{claimproof} 

Now, we create a new clause gadget for every clause in $\phi$.  Furthermore, we enlarge the created variable gadgets $T^\exists_x$ to $T^\exists_x(0,3)$ (\cref{claim:enlarge}). Note that for enlarged existential variable gadgets, we have that the winner is determined in round $r_1+1$.
Let $n$ denote the number of variables in $\phi$ and $m$ denote the number of clauses in $\phi$. In total, we have $n+m$ subtournaments, each of height $3$ and with $8$ players. Let $\ell$ be the smallest integer such that $8(n+m)\le 2^\ell$. Denote $n^*=2^\ell$.

Now, we arrange the subtournaments corresponding to enlarged existential variable gadgets next to each other and add $n^\star-8n$ dummy players. Let $T_\text{variables}$ denote the resulting subtournament. Note that $T_\text{variables}$ has $n^\star$ players and height $\ell$. 
We also arrange the subtournaments corresponding to new clause gadgets next to each other and add $n^\star-8m$ dummy players. Let $T_\text{clauses}$ denote the resulting subtournament. Note that $T_\text{clauses}$ also has $n^\star$ players and height $\ell$ (same as~$T_\text{variables}$).

Finally, we create the overall tournament $T^\star$ by putting $T_\text{variables}$ and 
$T_\text{clauses}$ next to each other, such that the respective winners play against each other, and then we add $e^*$ and $2n^\star-1$ dummy players.  
For all players $a,b$ with for which we have not yet specified winning probabilities, we set $p(a,b)=p(b,a)=\frac{1}{2}$.
Note that $T^\star$ has $4n^\star$ players. Hence, the number of players is in $O(n+m)$. 
We ask for a best response for the coalition players that maximized the winning probability of $e^*$.

Since all relevant decisions of the coalition players are made in the first round, we have that a best response constitutes a (full) strategy for the coalition players. It follows that given a best response for the first round, we can compute the winning probability of $e^*$ in polynomial time. We claim that $\phi$ is satisfiable if and only if there is a best response (a non-adaptive strategy for the coalition players) for the created instance of \basicbestresponseabbrv (\nonadaptproblemabbrv) that leads $e^*$ to win the tournament with probability one. The correctness is analogous to the correctness proof of \cref{thm:PH}.
\end{proof}

\section{Algorithmic Results}\label{sec:algo}

In this section, we present our algorithmic results. The main contribution is a dynamic programming algorithm that we analyze in several different ways. From this, we obtain that \generalizedproblemabbrv is in XP when parameterized by the coalition size and that \basicproblemabbrv is fixed-parameter tractable when parameterized by the combination of the coalition size and the size of a minimum random game cover. Furthermore, we get that \generalizedproblemabbrv is contained in PSPACE. Finally, we argue that the presented algorithm can also be used to compute the best responses.

\subsection{The Dynamic Program}

  \begin{figure*}[t]
 \centering
    \includegraphics[scale=1]{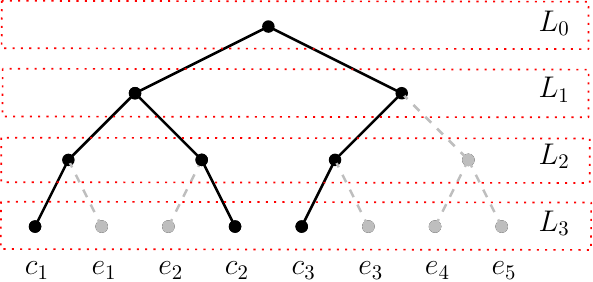}
    \caption{Here, the black edges are the edges of the coalition skeleton $T_{C}$ of $T$, when players $c_{1},c_{2}$, and $c_{3}$ are identified as coalition players. Note that for $L_{1}$, $(c_{1},c_{3})$ and $(c_{2},e_{4})$ are examples of valid configurations while $(c_{3},e_{4})$ is not a valid configuration. Furthermore, for $L_{2}$, both $(e_{4})$ and $(e_{5})$ represent valid sibling configurations.} 
    \label{fig2}
\end{figure*}

We start by describing a dynamic programming algorithm. To this end, we need to introduce some concepts first. 
Given a set of players $\{e_1, e_2, \ldots, e_n\}$ and a coalition $C\subseteq N$%
, we define the \emph{coalition skeleton} $T_C$ to be the subtree of the tournament tree $T$ consisting of all paths from the root of $T$ to a leaf that is labeled with a player in $C$.
We call the $i$th \emph{level}~$L_i$ of the coalition skeleton $T_C$ the set of vertices in $T_C$ that have distance $i$ from the root of $T_C$. See \cref{fig2} for an illustration. Note that every level of $T_C$ contains at most $|C|$ vertices.
We call a subset~$S$ of players a \emph{valid configuration} of the $i$th level $L_i$ of $T_C$ if $|S|=|L_i|$ and for every vertex $v\in L_i$, there is a player $e\in S$ such that the subtree of $T$ rooted at $v$ has a leaf that is labeled with~$e$.
We say that~$v$ is the position of $e$ in~$L_i$.
Intuitively this means, that if player $e$ reaches the $(r-i)$th round of the tournament, then $e$ is mapped to $v$ in the seeding for round $r-i$.
We call a subset $S$ of players a valid \emph{sibling configuration} of the $i$th level~$L_i$ of~$T_C$ if $|S|$ equals the number of vertices in $L_i$ that have a sibling which is not contained in $L_i$ and if
for every vertex $v\in L_i$ that has a sibling $v'\notin L_i$ there is a player in $e\in S$ such that the subtree of $T$ rooted at $v'$ has a leaf that is labelled with $e$. 
We say that $v'$ is the position of $e$ in~$L_i$.
It is crucial to note that valid sibling configurations do not include players from the coalition set $C$.

Let $S$ be a valid configuration for $L_i$ and let $S'$ be a valid sibling configuration for $L_i$. For players $e,e'\in S\cup S'$, we say that players $e$ and $e'$ are siblings if $v$ is the position of $e$ in $L_i$ and $v'$ is the position of $e'$ in $L_i$, and $v$ and $v'$ are siblings in $T$. If players $e$ and $e'$ are siblings we denote $e'=s_{S,S'}(e)$.
The probability $p(S)$ of a sibling configuration $S$ for level $L_i$ is the probability that all players in $S$ advance to the $(r -i)$th round of the tournament, where $r\le n$ is the total number of rounds of the tournament. Note that $p(S)$ can be computed by a straightforward dynamic program.

Given a valid configuration $S$ for $L_i$, a valid sibling configuration $S'$ for $L_i$, a valid configuration~$S^\star$ for $L_{i-1}$, and a strategy profile $c$, we denote with $p(S,S',S^\star,c)$ the probability that configuration $S^\star$ is obtained for $L_{i-1}$ assuming configuration $S$ and sibling configuration $S'$ are present in $L_{i}$ and strategy profile $c$ is used by the coalition players in round $r-i$ of the tournament. Formally, we have
\begin{align*}
  p(S,S',S^\star,c)= & \prod_{e\in S^\star\cap C} c(e)\cdot p(e,s_{S,S'}(e))\cdot \\
   & \prod_{e\in S^\star\setminus C} p(e,s_{S,S'}(e)).  
\end{align*}


Let $\mathcal{S}\subseteq 2^{\{e_1,e_2,\ldots,e_n\}}$ be the set of all possible configurations. We create a dynamic program $M:\{0,1,2,\ldots,r\}\times \mathcal{S}\rightarrow [0,1]$ that maps a level $L_i$ of $T_C$ together with a valid configuration~$S$ for $L_i$ to the probability that $e^\star$ wins assuming the players in $S$ are seeded directly into their positions in $L_i$ and assuming the coalition players maximize the winning probability of  $e^\star$. We define $M$ as follows. We have
\[
M[0,\{e^\star\}]=1,
\]
and for all $S\subseteq\{e_1,e_2,\ldots,e_n\}$ with $S\neq \{e^\star\}$, we have
\[
M[0,S]=0.
\]
Let $S\subseteq\{e_1,e_2,\ldots,e_n\}$ be valid for $L_i$ for $i>0$, then we have 

\begin{equation*}
M[i, S] = \sum_{\substack{\text{valid sibling configuration} \\ \substack{S'} \ \text{for} \ L_{i}}} \left( p(S') \cdot \max_{\substack{\text{strategy profile} \\ \substack{c}}} \left( \sum_{\substack{\text{valid configuration} \\ \substack{S^\star \text{ for} \ L_{i-1}}} } p(S, S', S^\star, c) \cdot M[i-1, S^\star] \right) \right).
\end{equation*}



Before we analyze the running time of the algorithm, we prove that the presented dynamic programming algorithm is correct.

\begin{lemma}\label{lem:DPcorrect}
    For all $i\in \{0,1,2,\ldots,r\}$ and all valid configurations $S\subseteq\{e_1,e_2,\ldots,e_n\}$ for $L_i$ we have that $M[i,S]$ is the probability that $e^\star$ wins assuming the players in $S$ are seeded directly into their positions in $L_i$.
\end{lemma}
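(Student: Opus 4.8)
The plan is to prove \cref{lem:DPcorrect} by induction on $i$, the level of the coalition skeleton $T_C$, working from the leaves of the recursion ($i=0$, closest to the root of the tournament) outward toward the first round.

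\textbf{Base case.} For $i=0$, the level $L_0$ consists only of the root of $T_C$, which is the root of the tournament tree $T$. The only way $e^\star$ can be ``seeded directly into its position in $L_0$'' is if the single player in the configuration $S$ is $e^\star$ itself, in which case $e^\star$ has already won and the probability is $1$; for any other $S \neq \{e^\star\}$ the probability is $0$. This matches the definition $M[0,\{e^\star\}]=1$ and $M[0,S]=0$ otherwise.

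\textbf{Inductive step.} Fix $i>0$ and assume the claim holds for $i-1$. Fix a valid configuration $S$ for $L_i$; we must show that the right-hand side of the recurrence for $M[i,S]$ equals the probability that $e^\star$ wins when the players in $S$ are seeded into their positions in $L_i$. The argument proceeds by conditioning. First, the players in $S$ occupy the coalition-skeleton vertices of $L_i$, but to know the outcomes of the round-$(r-i)$ games involving these players, we also need to know their opponents, i.e., which non-coalition players occupy the sibling vertices of $L_i$; these are exactly the valid sibling configurations $S'$, and by definition $p(S')$ is the probability that a given $S'$ is realized. (These events are mutually exclusive over all valid sibling configurations $S'$, and — conditioned on $S$ being in place — they partition the probability space up to events where $e^\star$ cannot win anyway, e.g., if $e^\star \notin S$ and $e^\star$ is not among the possible siblings, though one should check that the partition is actually complete; see the obstacle discussion below.) Second, conditioned on both $S$ and $S'$ being present in $L_i$, the coalition players in $L_i$ choose a strategy profile $c$ for this round; since their objective is to maximize the winning probability of $e^\star$, and since (by an exchange-argument / optimality-of-pure-strategies observation analogous to the one in \cref{sec:prelim}) it suffices to consider deterministic strategy profiles, the best they can achieve is the maximum over $c$ of the resulting conditional winning probability. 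Third, conditioned on $S$, $S'$, and $c$, the games of round $r-i$ involving coalition-skeleton vertices are played out: each coalition player $e \in S^\star \cap C$ advances to $L_{i-1}$ with probability $c(e)\cdot p(e,s_{S,S'}(e))$ and each non-coalition player $e \in S^\star \setminus C$ advances with probability $p(e,s_{S,S'}(e))$; the product of these over a fixed target configuration $S^\star$ for $L_{i-1}$ is exactly $p(S,S',S^\star,c)$, and since the individual games have independent outcomes this is the probability of reaching $S^\star$. Summing over all valid configurations $S^\star$ for $L_{i-1}$, weighted by $M[i-1,S^\star]$ — which, by the induction hypothesis, is the probability $e^\star$ wins from configuration $S^\star$ in $L_{i-1}$ — and using the law of total probability over the (disjoint, exhaustive) choices of $S^\star$, we obtain the conditional winning probability given $S$, $S'$, $c$. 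Finally, re-assembling: maximize over $c$, then take the $p(S')$-weighted sum over $S'$, yielding precisely the recurrence.

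\textbf{Main obstacle.} The subtle point — and the one I would spend the most care on — is justifying that the sums over sibling configurations $S'$ and over next-level configurations $S^\star$ are genuinely exhaustive partitions of the relevant events, and that the coalition's adaptivity is faithfully captured. For the $S'$ sum: one must argue that, conditioned on $S$ being seeded into $L_i$, exactly one valid sibling configuration is realized, and that $\sum_{S'} p(S')$ over valid sibling configurations equals $1$ (or, more precisely, that the events not covered contribute zero to $e^\star$'s winning probability because $e^\star$ has already been eliminated — but in fact every leaf outside $T_C$ lies in some subtree rooted at a sibling vertex, so every realized seeding does determine a valid sibling configuration). For the adaptivity: the key insight is that the coalition players situated in $L_i$ make their round-$(r-i)$ decisions knowing $S$ and $S'$ — which is exactly the information encoded in the configuration and sibling configuration — and future coalition decisions (captured inside $M[i-1,\cdot]$) are made knowing the realized $S^\star$; thus the nested $\max$-then-sum structure of the recurrence mirrors the game-tree structure of the adaptive strategy, and one should remark that an optimal adaptive strategy can be assumed to depend only on the current configuration (a player's position and the set of surviving players), not on the full history, which is immediate since the remaining tournament's probability structure depends only on the current configuration. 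I would also note that independence of game outcomes across a single round (used to factor $p(S,S',S^\star,c)$ as a product) and across rounds (used to multiply by $M[i-1,S^\star]$) follows from the model, in which each game's outcome is an independent draw.
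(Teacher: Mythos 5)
Your proof is correct and follows essentially the same route as the paper's: induction on the level $i$, with the base case matching the definition of $M[0,\cdot]$ and the inductive step obtained by conditioning on the sibling configuration $S'$ (weighted by $p(S')$), maximizing over the coalition's strategy profile for the current round, and applying the law of total probability over next-level configurations $S^\star$ together with the induction hypothesis for $M[i-1,S^\star]$. Your discussion of exhaustiveness of the sibling-configuration partition and of why adaptive strategies need only depend on the current configuration is more explicit than the paper's (rather terse) argument, but it is elaboration of the same proof, not a different one.
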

\begin{proof}
    We prove the lemma statement by induction on the level $i$. For $i=0$, we have $M[0,\{e^\star\}]=1$, and for all $S\subseteq\{e_1,e_2,\ldots,e_n\}$ with $S\neq \{e^\star\}$, we have $M[0,S]=0$. Hence, for $i=0$, the dynamic program is clearly correct.

    Assume $i>0$. Recall that the configuration $S$ specifies the player positions within $L_i$. In order to compute a configuration for the next round of the tournament, that is, level $i-1$, we need to know which players the players in $S$ play against. This is specified by the sibling configuration $S'$. The probability of attaining sibling configuration $S'$ is represented by $p(S')$. Given a configuration $S$ and a sibling configuration $S'$ for level $i$, the probability of player advancement relies on the strategic choices made by the coalition players. Since the coalition players may choose their strategy, we consider the strategy profile that maximizes  $ e^{*}$'s winning probability. 
    
    Given a configuration $S$, a sibling configuration $S'$, and a strategy profile $c$, the probability of achieving configuration $S^\star$ in level $i-1$ is denoted as $p(S,S',S^\star,c)$. By induction hypothesis, the probability that $e^\star$ wins given that $S^\star$ is the configuration of $L_{i-1}$ is $M[i-1,S^\star]$. 
    
    The above arguments can directly be translated into the recursive formula for $M[i,S]$. Consequently, we can conclude that the lemma statement is true.
\end{proof}

\subsection{Running Time Analysis}

Depending on the way we analyze the running time of the dynamic programming algorithm, we obtain several results. We first show that \generalizedproblemabbrv is contained in XP when parameterized by the coalition size.
\cref{thm:w1} implies that we cannot expect to obtain an FPT-algorithm for this parameterization.

\begin{theorem}\label{thm:xp}
\generalizedproblemabbrv can be solved in $n^{O(|C|)}$ time.
\end{theorem}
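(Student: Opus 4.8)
The plan is to bound the running time of the dynamic programming algorithm described above. First I would observe that the dynamic program is indexed by pairs $(i,S)$ where $i\in\{0,1,\ldots,r\}$ is a level of the coalition skeleton $T_C$ and $S$ is a valid configuration for $L_i$. Since every level $L_i$ of $T_C$ contains at most $|C|$ vertices, a valid configuration $S$ is a set of at most $|C|$ players, so the number of possible configurations is at most $\sum_{j=0}^{|C|}\binom{n}{j} = n^{O(|C|)}$; similarly the number of valid sibling configurations per level is $n^{O(|C|)}$. As $r\le n$, the total number of table entries $M[i,S]$ is $n^{O(|C|)}$.

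Next I would bound the work per entry. Computing $M[i,S]$ requires summing over all valid sibling configurations $S'$ for $L_i$ (there are $n^{O(|C|)}$ of them), and for each such $S'$ maximizing over all strategy profiles $c:C\to\{0,1\}$. Here one must be slightly careful: a priori there are $2^{|C|}$ strategy profiles, which is not polynomially bounded in $n^{O(|C|)}$ — but in fact only the coordinates of $c$ corresponding to coalition players that are positioned in $L_i$ and actually still competing in round $r-i$ matter, and there are at most $|C|$ of those, so effectively we range over at most $2^{|C|}$ profiles. Since $2^{|C|}\le n^{|C|}$ (assuming $n\ge 2$), this is absorbed into the $n^{O(|C|)}$ bound. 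For each choice of $(S',c)$ we sum over all valid configurations $S^\star$ for $L_{i-1}$ (again $n^{O(|C|)}$ many), and computing each $p(S,S',S^\star,c)$ is a product of at most $|C|$ probability-matrix entries, which takes polynomial time. I would also note that the quantities $p(S')$ (the probability that all players in a sibling configuration advance to the relevant round) can be precomputed by a straightforward auxiliary dynamic program in $n^{O(|C|)}$ time, as remarked in the text. Multiplying these factors, each table entry is computed in $n^{O(|C|)}$ time, and hence the whole table is filled in $n^{O(|C|)}$ time.

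Finally I would connect the table to the decision problem: the root of $T_C$ forms level $L_0$, but more usefully the entry we want is at the top level, namely we read off the answer from the entries $M[r,S]$ ranging over valid configurations $S$ of the leaf level of $T_C$ (equivalently, we invoke the recurrence starting from the round where the coalition players first play); the configuration determined by the initial tournament seeding $s_1$ gives the optimal winning probability of $e^*$, and we compare it to the threshold $t$. For the generalized setting, the only change is that ``round'' is measured by distance to the root and the coalition skeleton is defined with respect to the generalized tournament tree $T$; the level structure and the bound of at most $|C|$ vertices per level are unchanged, and the number of rounds $r$ is at most $n$, so the same counting goes through verbatim. By \cref{lem:DPcorrect} the computed value is correct, giving an $n^{O(|C|)}$-time algorithm.

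The main obstacle I expect is the bookkeeping around strategy profiles and the precise correspondence between ``positions in $L_i$,'' ``siblings,'' and the round-by-round seedings — i.e.\ making sure that when we fix $S$, $S'$, and $c$ the quantity $p(S,S',S^\star,c)$ really is the transition probability to level $i-1$, and that no coalition player's decision is double-counted or omitted. But this is essentially already handled by the setup preceding \cref{lem:DPcorrect}; the running-time argument itself is then just a careful product of polynomially-many and $n^{O(|C|)}$-many factors, with the one subtlety being to confirm that $2^{|C|}$ strategy profiles is dominated by $n^{O(|C|)}$.
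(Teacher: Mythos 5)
Your proposal is correct and follows essentially the same route as the paper's proof: bound the table size by $n^{O(|C|)}$ using the fact that each level of the coalition skeleton has at most $|C|$ vertices, bound the per-entry work by the product of $n^{O(|C|)}$ sibling configurations, $2^{|C|}\le n^{|C|}$ strategy profiles, and $n^{O(|C|)}$ successor configurations, and read the answer off the leaf-level entry (whose unique valid configuration is $C$) via \cref{lem:DPcorrect}. The extra care you take with the strategy-profile count and the precomputation of $p(S')$ is consistent with, and slightly more explicit than, the paper's argument.
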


\begin{proof}
Observe that there is exactly one valid configuration for level $r$, namely, $C$. By \cref{lem:DPcorrect}, it follows that $M[r,C]$ is the probability that $e^\star$ wins the tournament. 

In the remainder, we show that the time required to compute all entries of $M$ (and hence, in particular, the entry $M[r,C]$) is in $n^{O(|C|)}$.

First, note that the size of the dynamic programming table is in $O(n^{|C|}\cdot n)$, since all valid configurations contain at most $|C|$ players and there are at most $n$ rounds in the tournament. It remains to analyze the time needed to compute one entry of the table. The first sum in the recursive expression sums over all valid sibling configurations. A sibling configuration is a subset of the players of size at most $|C|$. There are $n^{O(|C|)}$ such subsets, and we can check for each one in polynomial time whether it is a valid sibling configuration. Next, we have a maximum over all strategy profiles, of which there are $2^{|C|}$ many. Lastly, we sum over all valid configurations for the previous level. There are $n^{O(|C|)}$ such configurations. Furthermore, the functions $p$ (note that we overloaded notation here) can clearly be computed in polynomial time. It follows that the overall time required to compute all entries of $M$ is in $n^{O(|C|)}$.
\end{proof}

Next, we show how to obtain an FPT-algorithm for \basicproblemabbrv when parameterized by the combination of the coalition size and the size of a minimum random game cover. \cref{thm:w1} implies that we cannot expect to get an FPT-algorithm for \generalizedproblemabbrv with the same parameterization.

\begin{theorem}\label{thm:fpt}
\basicproblemabbrv can be solved in $(|C|+x)^{O(|C|+ x)}\cdot n^{O(1)}$ time, where $x$ is the size of a minimum random game cover.
\end{theorem}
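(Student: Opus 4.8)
The plan is to reuse the dynamic program $M$ from the proof of \cref{thm:xp} essentially verbatim, but to prune — for each level of the coalition skeleton — which configurations and sibling configurations are ever enumerated. Two ingredients make this pruning effective: the tournament tree is balanced (so it has only $r=\log n$ rounds), and we first compute a minimum random game cover $X$ with $|X|=x$ using \cref{prop:rgc}. For a vertex $v$ of the tournament tree $T$, let $h(v)$ denote the height of the subtree $T_v$ rooted at $v$, let $A_v$ be the set of players in $C\cup X$ whose leaf lies in $T_v$ (so $|A_v|\le|C|+x$), and let $R_v$ be the set of players that, under some strategy of the coalition, reach $v$ with positive probability. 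The structural core of the proof is the bound
\[
  |R_v|\ \le\ |A_v|\cdot h(v)+1\ \le\ (|C|+x)\,r+1 .
\]
I would prove this by induction on $h(v)$. If $A_v=\emptyset$, then every game inside $T_v$ is deterministic (this is exactly where the random game cover is used) and uninfluenced by the coalition, so $T_v$ has a single possible winner and $|R_v|=1$. Otherwise, if $v_1,v_2$ are the children of $v$, then $R_v\subseteq R_{v_1}\cup R_{v_2}$ and $h(v_1)=h(v_2)=h(v)-1$ by balancedness, and a short case analysis on which of $A_{v_1},A_{v_2}$ are empty closes the induction. A superset $\widehat R_v\supseteq R_v$ obeying the same bound can be computed in polynomial time by a bottom-up sweep over $T$, starting from $\widehat R_v=\{e\}$ at the leaf labeled $e$ and using the rule that $\widehat R_v$ consists of every $p\in\widehat R_{v_1}$ for which some $q\in\widehat R_{v_2}$ has $p(p,q)>0$ or $q\in C$, together with the symmetric set coming from $\widehat R_{v_2}$.

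Given the sets $\widehat R_v$, I would run the dynamic program of \cref{thm:xp} with the following restriction: we only ever compute $M[i,S]$ for configurations $S$ of $L_i$ that are \emph{$R$-valid}, meaning they place, at each vertex $v$ of $L_i$, a player of $\widehat R_v$; and in the recurrence for $M[i,S]$ the outer sum ranges only over $R$-valid sibling configurations $S'$ of $L_i$, while the innermost sum ranges only over $R$-valid configurations $S^\star$ of $L_{i-1}$. This changes no computed value, because a sibling configuration $S'$ with $p(S')>0$ must be $R$-valid, and a configuration $S^\star$ with $p(S,S',S^\star,c)>0$ must be $R$-valid, so all discarded terms are zero; and $M[r,C]$ — which by \cref{lem:DPcorrect} (as argued in the proof of \cref{thm:xp}) equals the optimal winning probability of $e^\star$ — is itself $R$-valid since $\widehat R_v=\{c\}$ at the leaf of every coalition player $c$ (and if $e^\star\notin\widehat R_{\mathrm{root}}$ then $e^\star$ wins with probability $0$ and the instance is trivial). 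Correctness of the pruned table then follows by induction on the level, since $M[0,\cdot]$ is given explicitly and the restricted recurrence only invokes correctly computed (i.e.\ $R$-valid) entries of the previous level.

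For the running time: each level $L_i$ of the coalition skeleton has at most $|C|$ vertices, so it admits at most $\prod_{v\in L_i}|\widehat R_v|\le\big((|C|+x)\,r+1\big)^{|C|}$ $R$-valid configurations, and likewise at most that many $R$-valid sibling configurations; there are $2^{|C|}$ strategy profiles, there are $O(\log n)$ levels, and the probabilities $p(S')$ and $p(S,S',S^\star,c)$ are computable in polynomial time. Hence the whole table is filled in time $n^{O(1)}\cdot\big((|C|+x)\,r+1\big)^{O(|C|)}\cdot 2^{O(|C|)}$. Using $r=\log n$, the factor $\big((|C|+x)\log n\big)^{O(|C|)}$ splits as $(|C|+x)^{O(|C|)}\cdot(\log n)^{O(|C|)}$; the first factor is at most $(|C|+x)^{O(|C|+x)}$, and for the second I would distinguish cases: if $\log n\le(|C|+x)^2$ then $\log\log n\le 2\log(|C|+x)$, so $(\log n)^{c|C|}=2^{c|C|\log\log n}\le(|C|+x)^{2c(|C|+x)}$; and if $\log n>(|C|+x)^2$ then $|C|\le\sqrt{\log n}$, so $(\log n)^{c|C|}\le 2^{c\sqrt{\log n}\,\log\log n}\le 2^{c\log n}=n^{c}$ for all sufficiently large $n$ (smaller $n$ being handled trivially). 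Combined with $2^{O(|C|)}\le(|C|+x)^{O(|C|+x)}$, this yields the claimed bound $(|C|+x)^{O(|C|+x)}\cdot n^{O(1)}$.

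The step I expect to be the main obstacle is the structural bound on $|R_v|$: recognizing that the random game cover together with balancedness confines the round-by-round participants at each node to a set of size $O((|C|+x)\log n)$, and setting up an induction hypothesis robust enough to survive the case where one of the two subtrees below a node contains no player of $C\cup X$. The subsequent bookkeeping — verifying that pruning to $R$-valid configurations is lossless, and the arithmetic reconciling $\big((|C|+x)\log n\big)^{O(|C|)}$ with the target form $(|C|+x)^{O(|C|+x)}\cdot n^{O(1)}$ — is delicate but essentially routine.
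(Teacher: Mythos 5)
Your proof is correct, and its overall skeleton matches the paper's: prune the dynamic program of \cref{thm:xp} so that only configurations that can actually occur are enumerated, bound their number by a quantity involving $|C|+x$ and a power of $\log n$, and then absorb the $(\log n)^{O(\cdot)}$ factor via the standard win--win case distinction. Where you genuinely diverge is in the key counting lemma. The paper argues \emph{globally}: since every game not involving a player of $C\cup X$ has a deterministic, unmanipulable outcome, an entire run of the tournament is determined by the vector of win-counts of the $|C|+x$ players in $C\cup X$, each coordinate taking at most $\log n+1$ values, which bounds the number of reachable (sibling) configurations by $O((\log n)^{|C|+x})$. You instead argue \emph{locally}, bounding for each tree node $v$ the set $R_v$ of players that can reach $v$ by $|A_v|\cdot h(v)+1$ via induction on the height (the induction does close: the recursive step needs only $|A_v|\ge 1$, and the case $A_v=\emptyset$ collapses to a single deterministic winner), which yields at most $\bigl((|C|+x)\log n+1\bigr)^{|C|}$ configurations per level. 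The two bounds are incomparable, but both suffice; yours has the mild advantages that the exponent is only $|C|$ rather than $|C|+x$ and that it comes with an explicit polynomial-time bottom-up computation of the supersets $\widehat R_v$, whereas the paper enumerates win-count vectors. Your closing arithmetic simply spells out the inclusion $(\log n)^{O(k)}\subseteq k^{O(k)}\cdot n^{O(1)}$ that the paper cites from the literature.
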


\begin{proof}
Again, as in the proof of \cref{thm:xp}, observe that there is exactly one valid configuration for level $r$, namely, $C$. Furthermore, in the case of \basicproblemabbrv, we have that $r=\log n$. By \cref{lem:DPcorrect}, it follows that $M[\log n,C]$ is the probability that $e^\star$ wins the tournament. 

In the remainder, we show that the time required to compute $M[\log n,C]$ is in $(|C|+x)^{O(|C|+ x)}\cdot n^{O(1)}$.
To this end, we present an alternative way to enumerate all \emph{reachable} (sibling) configurations. More precisely, 
a valid sibling configuration $S'$ is reachable if $p(S')\neq 0$. Initially, we have that~$C$ is a reachable configuration (for level $\log n$) and all valid configurations $S^\star$ for level $i-1$ are reachable if there is a reachable configuration $S$ for level $i$, a reachable sibling configuration $S'$, and a strategy profile $c$ such that $p(S,S',S^\star,c)\neq 0$.

We start by computing a minimum random game cover in $2^{O(x)}\cdot n^{O(1)}$ time (\cref{prop:rgc}). Now we can observe that each overall outcome of the tournament is uniquely defined by the number of wins achieved by each player in the random game cover and each player in the coalition: the outcomes of each game that does not involve a player from the random game cover or a coalition player is uniquely determined. Furthermore, if we can enumerate all possible outcomes of the tournament, we can also enumerate all reachable valid configurations and all reachable valid sibling configurations. Each player in the random game cover can win at most $\log n+1$ games. Hence, there are $O((\log n)^{|C|+x})$ possible outcomes of the tournament and hence also as many valid (sibling) configurations.

Using this, we get that the number of entries of the dynamic programming table that we need to compute in order to compute $M[\log n,C]$ is in $O((\log n)^{|C|+x+1})$. It remains to analyze the time needed to compute one entry of the table. The first sum in the recursive expression sums over all valid sibling configurations. Note that we only need to enumerate all reachable valid sibling configurations to compute the recursion. The number of reachable valid sibling configurations is in $O((\log n)^{|C|+x})$. Next, we have a maximum over all strategy profiles, of which there are $2^{|C|}$ many. Lastly, we sum over all valid configurations for the previous level. Again, it is sufficient to sum over all reachable valid sibling configurations, of which there are $O((\log n)^{|C|+x})$ many. Furthermore, the functions $p$ (note that we overloaded notation here) can clearly be computed in polynomial time. 
Now we use the well-known fact that for $k\le n$ we have $(\log n)^{O(k)}\subseteq k^{O(k)}+n^{O(1)}$ (see e.g.~\cite{RamanujanS17}).
It follows that the overall time required to compute all entries of $M$ is in $(|C|+x)^{O(|C|+ x)}\cdot n^{O(1)}$.
\end{proof}

Lastly, we observe the following: if we use the definition of the dynamic program as a recursive algorithm, that is, we recompute each value in the recursion instead of looking it up in the table, then we can solve \generalizedproblemabbrv using only polynomial space. Hence, we obtain the following.

\begin{theorem}\label{cor:pspace}
    \generalizedproblemabbrv is contained in PSPACE.
\end{theorem}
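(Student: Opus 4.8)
The plan is to show that the dynamic program of \cref{lem:DPcorrect}, when implemented as a straightforward recursive procedure rather than as a memoized table, runs in polynomial space. The key observation is that the recursion has polynomially bounded depth and that at each level we only need to keep a polynomial amount of auxiliary information on the call stack.

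First I would argue that the recursion depth is at most $r \le n$, since each recursive call for $M[i,S]$ only invokes calls of the form $M[i-1,S^\star]$, i.e., the level strictly decreases by one and bottoms out at level $0$ (where $M[0,\cdot]$ is returned directly). Next, I would examine what must be stored in a single stack frame while evaluating $M[i,S]$ via its recursive formula: we iterate over valid sibling configurations $S'$ (each of size at most $|C|$, hence describable in polynomial space), over the $2^{|C|}$ strategy profiles $c$ (each describable in $|C|$ bits), and over valid configurations $S^\star$ for $L_{i-1}$ (again polynomial-size descriptions). At any moment, only the current $S'$, current $c$, current $S^\star$, and a running partial sum/maximum need to be held; these are all of polynomial size. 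Crucially, the enumeration of sibling configurations, strategy profiles, and configurations can be done one at a time in polynomial space (we never materialize the full lists), and checking validity of a configuration or sibling configuration is a polynomial-time (hence polynomial-space) test on the tournament tree $T$. The quantities $p(S')$ and $p(S,S',S^\star,c)$ are computed by the auxiliary dynamic program mentioned after the definition of $p(S)$; I would note that this auxiliary program can itself be run in polynomial space (it is a product/sum over a polynomial-size structure, and the values are rationals with polynomially bounded bit-length, as all input probabilities are rationals and the tournament has polynomially many rounds). Since $p(S',S,S^\star,c)$ and $p(S')$ are products of at most $n$ probabilities, their bit-complexity stays polynomial, so arithmetic stays in polynomial space.

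Combining these, a single stack frame uses polynomial space, and the stack has depth at most $n$, so the total space is polynomial. Finally, reading off $M[r,C]$ (the unique valid configuration for level $r$ is $C$, as noted in the proof of \cref{thm:xp}) and comparing it to the threshold $t$ decides the instance in polynomial space. Hence \generalizedproblemabbrv $\in$ PSPACE.

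**Main obstacle.** The one place requiring genuine care is the bit-complexity of the intermediate probabilities: we must confirm that the rationals arising from $p(S')$, $p(S,S',S^\star,c)$, the products $p(S,S',S^\star,c)\cdot M[i-1,S^\star]$, and the nested sums and maxima never blow up in size. Since the input entries of $P_N$ are rationals with some bit-length $\beta$, and every quantity of interest is a sum/product of $\mathrm{poly}(n)$ such rationals arranged in a structure of $\mathrm{poly}(n)$ size, the numerators and denominators stay of size $\mathrm{poly}(n,\beta)$; this is the standard argument that makes the recursion genuinely polynomial-space and not merely polynomial-depth, and it is the detail I would be most careful to state explicitly.
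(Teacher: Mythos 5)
Your proof is correct and takes essentially the same route as the paper, whose entire argument is the one-line observation that evaluating the dynamic program as a recursive procedure (recomputing values instead of memoizing them) uses only polynomial space; you supply the stack-depth, per-frame, and enumeration details that the paper leaves implicit. One small caveat on your bit-complexity step: the sums over (sibling) configurations can have exponentially many terms, so the cleaner justification is that every intermediate value is a probability (hence at most $1$) whose denominator divides the product of the at most $n-1$ input game denominators, which keeps all numerators and denominators of bit-length $\mathrm{poly}(n,\beta)$.
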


Finally, we consider the problem of computing best responses. Notice when computing the entry $M[r,C]$ of the dynamic programming table, we have only one valid sibling configuration for $L_{r}$, say $S'$, namely the players that are seeded as the siblings of the coalition players. Hence, we have that a best response for the first round is a strategy profile that maximizes the expression
\begin{equation*}
\max_{\substack{\text{strategy profile} \\ \substack{c}}} \left( \sum_{\substack{\text{valid configuration} \\ \substack{S^\star \text{ for} \ L_{r}}} } p(C, S', S^\star, c) \cdot M[r-1, S^\star] \right).
\end{equation*}

It follows that the dynamic programming algorithm can also compute best responses (in the same running time bounds). Formally, we get the following.
\begin{corollary}
\generalizedbestresponseabbrv can be solved in $n^{O(|C|)}$ time and \generalizedbestresponseabbrv can be solved in polynomial space. \basicbestresponseabbrv can be solved in $(|C|+x)^{O(|C|+ x)}\cdot n^{O(1)}$ time, where $x$ is the size of a minimum random game cover. 
\end{corollary}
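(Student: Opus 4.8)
The plan is to piggyback on the reduction already carried out in the paragraph preceding the corollary: computing a best response for the first round amounts to running (a prefix of) the dynamic program $M$ and performing one extra maximization. Concretely, the level $L_r$ of the coalition skeleton consists exactly of the leaves labelled by the coalition players, so $L_r$ admits the unique valid configuration $C$ and, since the siblings of those leaves are fixed by the tournament seeding, the unique valid sibling configuration $S'$. Hence a best response is precisely any strategy profile $c$ attaining
\[
\max_{c}\ \sum_{S^\star}\ p(C,S',S^\star,c)\cdot M[r-1,S^\star],
\]
where $S^\star$ ranges over valid configurations for the level $L_{r-1}$. So the algorithm is: compute all table entries $M[i,S]$ for $i\le r-1$ (by \cref{lem:DPcorrect} these are correct), enumerate the $2^{|C|}$ strategy profiles, evaluate the inner sum for each, and return an argmax.

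First I would establish the $n^{O(|C|)}$ bound for \generalizedbestresponseabbrv by invoking the running-time analysis of \cref{thm:xp}: all entries of $M$ are computable in $n^{O(|C|)}$ time, and the final step adds only $2^{|C|}$ maximization rounds, each a sum over $n^{O(|C|)}$ configurations with polynomially computable summands, so the total remains $n^{O(|C|)}$. For \basicbestresponseabbrv I would instead reuse the refined analysis of \cref{thm:fpt}: here $r=\log n$, only $O((\log n)^{|C|+x+1})$ table entries are reachable and relevant, and enumerating reachable (sibling) configurations as in that proof also handles the final maximization; applying $(\log n)^{O(k)}\subseteq k^{O(k)}+n^{O(1)}$ yields the claimed $(|C|+x)^{O(|C|+x)}\cdot n^{O(1)}$ bound.

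For the polynomial-space claim for \generalizedbestresponseabbrv I would appeal to \cref{cor:pspace}: rather than tabulating $M$, evaluate it recursively in polynomial space. A best response is then obtained by iterating over the $2^{|C|}$ strategy profiles $c$, for each one recursively evaluating $\sum_{S^\star} p(C,S',S^\star,c)\cdot M[r-1,S^\star]$ (reusing the polynomial-space recursive evaluation of $M$), and keeping a running best value together with the corresponding profile; this requires only $O(|C|)$ bits of extra state, so the whole procedure runs in polynomial space.

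I do not expect a real obstacle here, since the corollary is essentially a bookkeeping consequence of the already-established dynamic program; the only points that need care are (i) justifying that $L_r$ has a unique valid sibling configuration, so that "the best response for the first round" is a single well-defined strategy profile matching the displayed expression, and (ii) noting that the $x$-parameterized bound is claimed only for the balanced problem \basicbestresponseabbrv, where $r=\log n$ makes the random-game-cover counting applicable — the generalized best-response problem cannot inherit this bound for the same reason that \generalizedproblemabbrv does not, namely that a generalized tournament tree may have up to $n$ rounds.
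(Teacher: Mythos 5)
Your proposal is correct and follows essentially the same route as the paper: observe that level $L_r$ admits the unique valid configuration $C$ and a unique valid sibling configuration $S'$, so a best response is any strategy profile maximizing $\sum_{S^\star} p(C,S',S^\star,c)\cdot M[r-1,S^\star]$, and then inherit the running-time and space bounds from Theorems~\ref{thm:xp}, \ref{thm:fpt}, and \ref{cor:pspace}. Your added remarks on the recursive polynomial-space evaluation and on why the $x$-parameterized bound applies only to the balanced variant are consistent elaborations of the paper's (terser) argument.
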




\section{Conclusion} \label{sec:conclu}

Our work introduces adaptiveness to the concept of coalition manipulation in probabilistic knockout tournaments, and presents several highly non-trivial results in this regard. Of course, the same questions can be asked for the concept of budget-constrained manipulation in knockout tournaments. Since coalition manipulation is a special case of budget-constrained manipulation (assign an ``infinite'' price to manipulations of matches involving only non-coalition players, and a price of $1$ to manipulations of matches involving at least one coalition player), all of our hardness results extend to this setting as well. However, the study of positive results in this setting is left for future research. Additionally, we would like to pose the following questions for future research:

\begin{itemize}
    \item Is \basicproblemabbrv PSPACE-hard? Recall that we have shown that \basicproblemabbrv is hard for each class in the PH, but not for PSPACE.
    \item Do the non-generalized problems considered in this paper belong to the class FPT when parameterized by the coalition size? Recall that we have shown, for these problems, containment in XP (or in FPT when the size $x$ of a minimum random game cover is also part of the parameterization), and for the generalized problem \generalizedproblemabbrv, W[1]-hardness.
    \item Do the non-generalized problems considered in this paper belong to the class FPT when parameterized by $x$? Recall that we have shown, for these problems, containment in FPT when the coalition size is also part of the parameterization), and for the generalized problem \generalizedproblemabbrv, NP-hardness for $x=2$.
    \item Which other natural parameters of the problems yield tractability? One possibility to derive such a parameter is to consider alternative measures (to the size of a minimum random game cover) for the closeness of the probability matrix to a deterministic one.
\end{itemize}

\bibliographystyle{abbrvnat}
\bibliography{bibliography}	

\end{document}